\def\@ACM@checkaffil{
    \if@ACM@instpresent\else
    \ClassWarningNoLine{\@classname}{No institution present for an affiliation}%
    \fi
    \if@ACM@citypresent\else
    \ClassWarningNoLine{\@classname}{No city present for an affiliation}%
    \fi
    \if@ACM@countrypresent\else
        \ClassWarningNoLine{\@classname}{No country present for an affiliation}%
    \fi
}
\newtheorem{proposition}{Proposition}[section]
\newrobustcmd{\bb}{\bfseries}
\begin{document}

\title{Secret-Free Device Pairing in the mmWave Band}

\author{Ziqi Xu\textsuperscript{\textdagger}\hspace{0.5cm}
Jingcheng Li\textsuperscript{\textdagger}\hspace{0.5cm} 
Yanjun Pan\textsuperscript{\textdaggerdbl}\hspace{0.5cm}
Ming Li\textsuperscript{\textdagger}\hspace{0.5cm}
Loukas Lazos\textsuperscript{\textdagger}}
\affiliation{\institution{\textdagger~ Department of Electrical and Computer Engineering, University of Arizona}}
\affiliation{\institution{\textdaggerdbl~ Department of Computer Science and Computer Engineering, University of Arkansas}}
\affiliation{Email: \textdagger~\{zxu1969, jli2972, lim, llazos\}@arizona.edu, \textdaggerdbl~ yanjunp@uark.edu}

\begin{abstract}

Many Next Generation (NextG)  applications feature devices that are capable of communicating and sensing in the Millimeter-Wave (mmWave) bands. Trust establishment is an important first step to bootstrap secure mmWave communication links,  which is challenging due to the lack of prior secrets and the fact that traditional cryptographic authentication methods cannot bind digital trust with physical properties. Previously, context-based device pairing approaches were proposed to extract shared secrets from common context, using various sensing modalities. However, they suffer from  various limitations in practicality and security. 

In this work, we propose the first secret-free device pairing scheme in the mmWave band that explores the unique physical-layer properties of mmWave communications. Our basic idea is to let Alice and Bob derive common randomness by sampling physical activity in the surrounding environment that disturbs their wireless channel. They construct  reliable fingerprints of the activity by extracting event timing information from the channel state. We further propose an uncoordinated path hopping mechanism to resolve the challenges of beam alignment for activity sensing without prior trust. A key novelty of our protocol is that it remains secure against  both  co-located passive adversaries and active Man-in-the-Middle attacks, which is not possible with existing context-based pairing approaches.  
We implement our protocol in a 28GHz mmWave testbed, and experimentally evaluate its  security in realistic indoor environments. Results show that our protocol can effectively thwart several different types of adversaries. 

\end{abstract}

\pagestyle{plain}
\settopmatter{printfolios=true} 

\maketitle

\section{Introduction}

The increasing demand for higher data rates has pushed wireless technologies to millimeter wave (mmWave) bands where bandwidth is abundant.  The physics of signal propagation at such short wavelengths offer some security advantages compared to typical microwave communications due to the substantial signal attenuation (which is proportional to the center frequency) and susceptibility to blockage \cite{7999294}. For instance, eavesdropping becomes more challenging as mmWave transmissions are highly-directional and do not travel through physical barriers. Nevertheless, common security threats such as eavesdropping, message injection, message modification, impersonation, and Man-in-the-Middle (MitM) attacks  are still feasible \cite{balakrishnan2019modeling, steinmetzer2018beam} if wireless transmissions are left unprotected.

The widely adopted approach for protecting wireless links is to incorporate cryptographic solutions relying on asymmetric and symmetric cryptosystems \cite{stallings2012computer}. However, such solutions require extensive key management for establishing and maintaining trust, which can pose scalability challenges. Moreover, digital keys created from pseudo-random generators do not bind the devices that use them with any surrounding physical property. For instance, two devices that use a symmetric key to encrypt communications over a public channel, cannot verify any other physical property such as location. Extending trust to the physical world has become prevalent with the advent of  Next Generation (NextG) communications, where many applications seamlessly integrate the cyber and physical worlds (e.g., industrial automation, connected healthcare, connected autonomous vehicles).

A complementary approach to pre-loading pseudo-randomly generated secrets is to generate secrets from common sources of randomness that exhibit high spatial and temporal dynamics \cite{xu2021pof}. The main idea is that two co-located devices can independently sample a high-entropy physical property to derive a common key. The high spatial correlation of the samples at the locations of the legitimate devices compared to the low correlation at the adversary's location offers the security advantage in the key generation process. The so-called context-based key generation methods have notable advantages such as reduced user efforts, perpetual key renewal, and high level of security to remote adversaries. Commonly explored physical modalities include light \cite{miettinen2014context}, sound \cite{xu2021inaudiblekey, miettinen2014context, schurmann2011secure}, wireless channel reciprocity \cite{liu2013fast, mathur2008radio}, ambient RF signals \cite{mathur2011proximate, jana2009effectiveness}, human operation \cite{li2020t2pair, zhang2018tap}, as well as multi-modal approaches \cite{han2018you, farrukh2022one}.

\subsection{Motivation and Challenges}

Context-based secure pairing methods present numerous challenges. First, many modalities (e.g., light, sound) suffer from low entropy, requiring long sampling times to agree on common bits. They also require all devices to possess a common sensing capability, which may not be available in IoT applications. More importantly, they require physical access control to guarantee security. The threat model assumes that the adversary has no physical access to the environment where measurements take place. Moreover, they are susceptible to active attacks where an adversary alters the physical environment to make key generation predictable. RF modalities exhibit high entropy in rich-multipath environments, which are common in sub-6 GHz communications where the majority of prior works have been operating \cite{mathur2011proximate, liu2013fast, liu2012exploiting}. However, mmWave communications follow geometric channel models \cite{akdeniz2014millimeter} with very few reflections contributing to the channel uncertainty. Moreover, RF based methods have been shown to be vulnerable to active attacks where an adversary, Mallory, opportunistically injects signals to make bit extraction predictable \cite{eberz2012practical}. 

\begin{figure}[t]
\centering
\includegraphics[width=0.7\columnwidth]{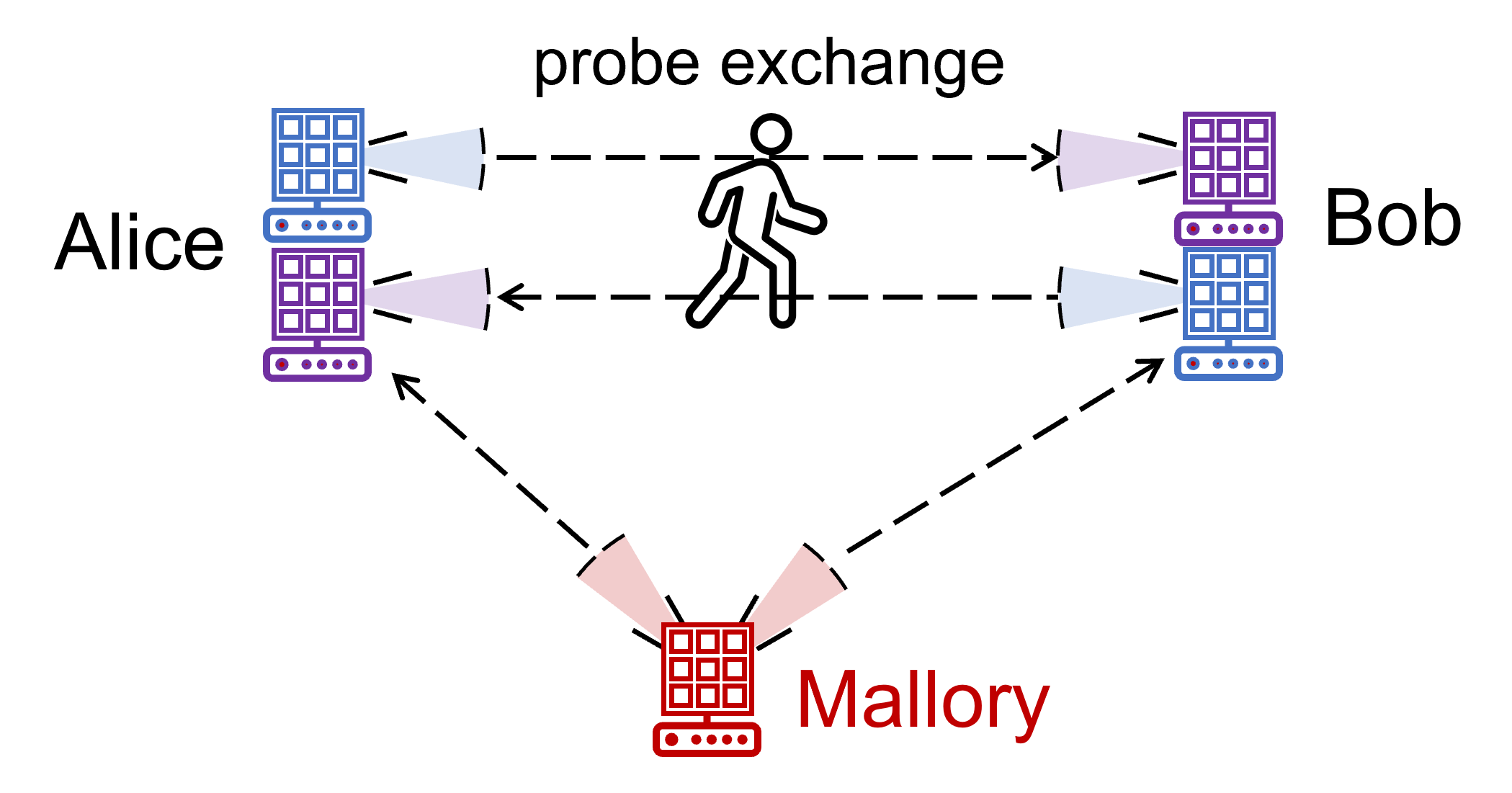}
\vspace{-0.1in}
\caption{Alice and Bob use mmWave probes to detect physical activity in the line-of-sight path and agree on a common key. Mallory  cannot sample the same activity.}
\label{fig:system}
\vspace{-0.15in}
\end{figure}

To counter the limitations of prior works and to transition device pairing protocols to  NextG  networks, we explore the unique PHY-layer properties of mmWave communications to develop a novel secret-free device pairing protocol. Specifically, Alice and Bob independently derive common randomness by sampling physical activity that disturbs their wireless channel in the presence of Mallory. Wireless paths are interrupted by artificial or naturally-occurring motion (e.g., human motion) within the environment at random times. This interruption is sensed by measuring the amplitude of channel state information (CSI) over the distinct paths that connect Alice and Bob. An example is shown in Figure~\ref{fig:system}. The antennas of Alice and Bob are aligned over the line-of-sight (LOS) path. The LoS path is interrupted by physical activity. Even if Mallory is in the same environment, she observes a different path and therefore cannot measure the same CSI variations. Mallory could observe the physical activity using other modalities such as vision; however, it is exceedingly difficult to estimate the impact of motion on the wireless channel at a very fine time scale, thus leaving the  pairing process secure.

There are several key technical challenges to design such a protocol that senses ambient activity for device pairing in the mmWave band. First, due to the small wavelength of mmWave signals, the CSI (and its amplitude)  in the mmWave band is very sensitive to small disturbances to the channel, and channel reciprocity may not hold. Directly adopting measured CSI samples  for key agreement will introduce a high bit mismatch rate.  Second,  since mmWave communication is highly directional,   devices must have their beams  aligned on the direction(s) where activity exists to be able to sense it. However, neither prior knowledge nor prior trust is available to support such beam alignment in a secure manner. Third, the protocol must be resistant against a wide range of threats, not only co-present passive adversaries  but also active attacks such as Man-in-the-Middle. This should be achieved without prior trust. 

\subsection{Main Contributions}
In this paper, we propose a novel context-based key pairing protocol for devices that operate in the mmWave band, with merely in-band RF communication. The basic protocol is called JellyBean, which exploits the pencil-like beams used in mmWave communications to detect disturbances (channel blockage) to the wireless channel due to physical motion from natural or artificial activity. To solve the first challenge, we extract the timing of activities from the CSI magnitude to generate a  fingerprint of the ambient environment, which is used as a common source of randomness between Alice and Bob. A passive adversary is unable to infer the secret if observing the same activities from a different vantage point due to the high directionality of mmWave transmissions. 

Our basic pairing protocol is still vulnerable to a passive adversary that is located in close proximity to either Alice or Bob and aligned in the sampled path or an  active adversary that launches beam-stealing attacks \cite{steinmetzer2018beam}.  Thus, we further introduce Jellybean+ which incorporates 
an uncoordinated path-hopping (UPH) mechanism.
UPH takes advantage of the electronic antenna steering capabilities at mmWave to randomize the paths sampled by Alice and Bob in a similar way as channel-hopping randomizes the communication frequency. Without knowledge of which paths are sampled at any point in time, a co-located adversary cannot construct fingerprints similar to Alice and Bob. A notable difference from frequency hoping is that in UPH no common hopping sequence is necessary for activity sensing.

We experimentally evaluate the performance and security of our protocols on a 28GHz mmWave testbed in typical indoor environments. Our experiments demonstrate that the the bit mismatch rate between Mallory and Alice or Bob far exceeds that of the legitimate devices, allowing for secure pairing in reasonable time. We  also show that the key derived from physical activities satisfies the NIST randomness tests \cite{bassham2010sp}.  To further demonstrate the security of our protocols, we analyze both passive attacks involving eavesdropping as well as recording of physical activities using cameras.  

 The key novelty of Jellybean compared to the state-of-the-art context-based pairing protocols is that it remains secure even if Mallory (a) Mallory is co-located in the same environment and can observe the ongoing physical activity and (b) launches an  active attack. In fact, Jellybean+ remains secure even if Mallory is at the same location as Alice or Bob.

\section{System and Threat Model}
\label{sec:system-model}

\subsection{System Model}

The system consists of an access point $A$ and a legitimate device $D$ located in a typical indoor environment. Both devices operate in the mmWave band and are equipped with electronically-steerable directional antennas. Their antennas can be aligned using the Sector Level Sweep (SLS) protocol of IEEE 802.11ad \cite{IEEE:802.11ad}, although any suitable beam alignment protocol can be used. The devices aim at securely pairing and establish a common secret $k_A = k_D =k$. The devices do not share any prior secrets.

\begin{figure}[t]
\centering
\setlength{\tabcolsep}{-3pt}
\begin{tabular}{cc}
  \includegraphics[width=0.52\columnwidth]{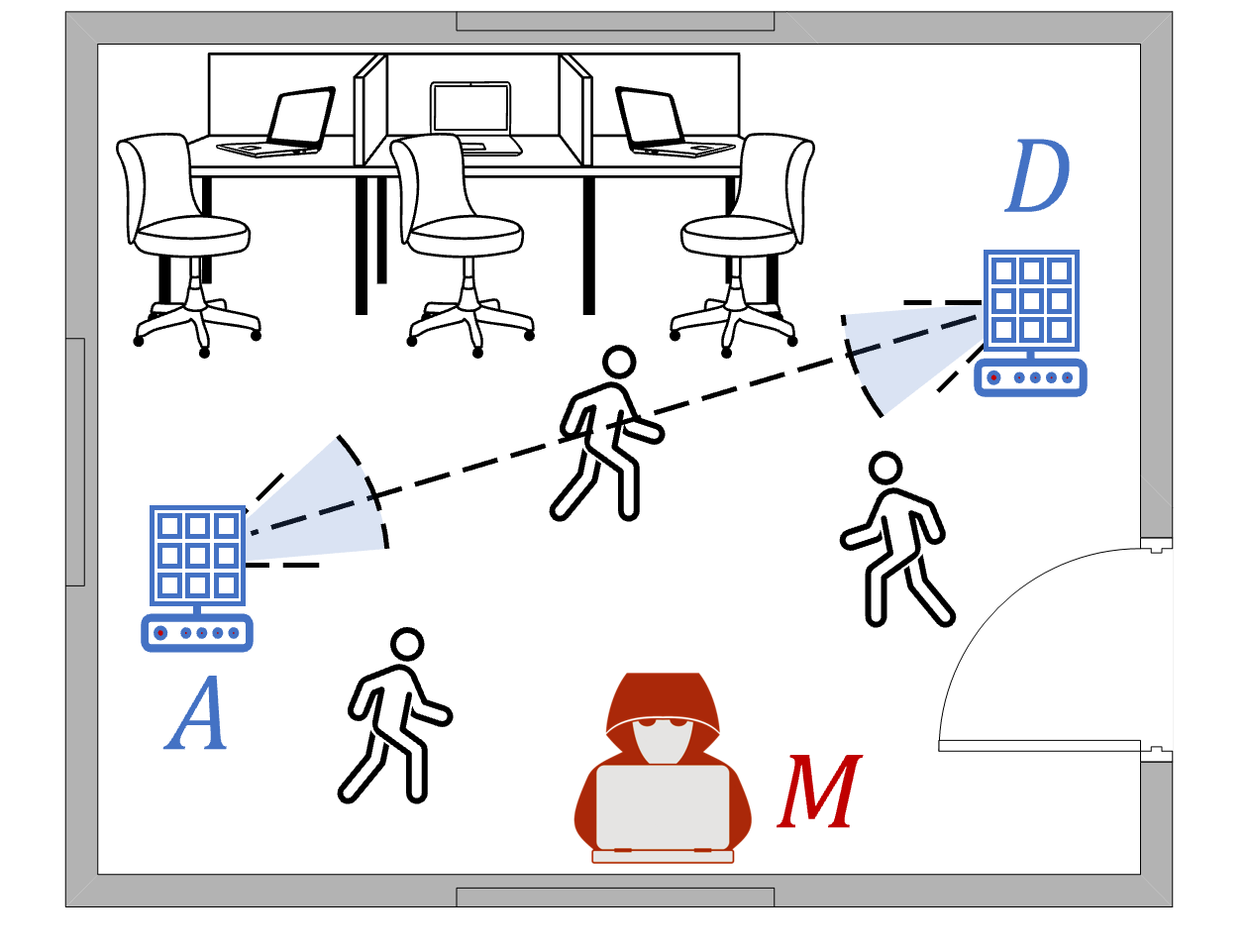}&
  \includegraphics[width=0.52\columnwidth]{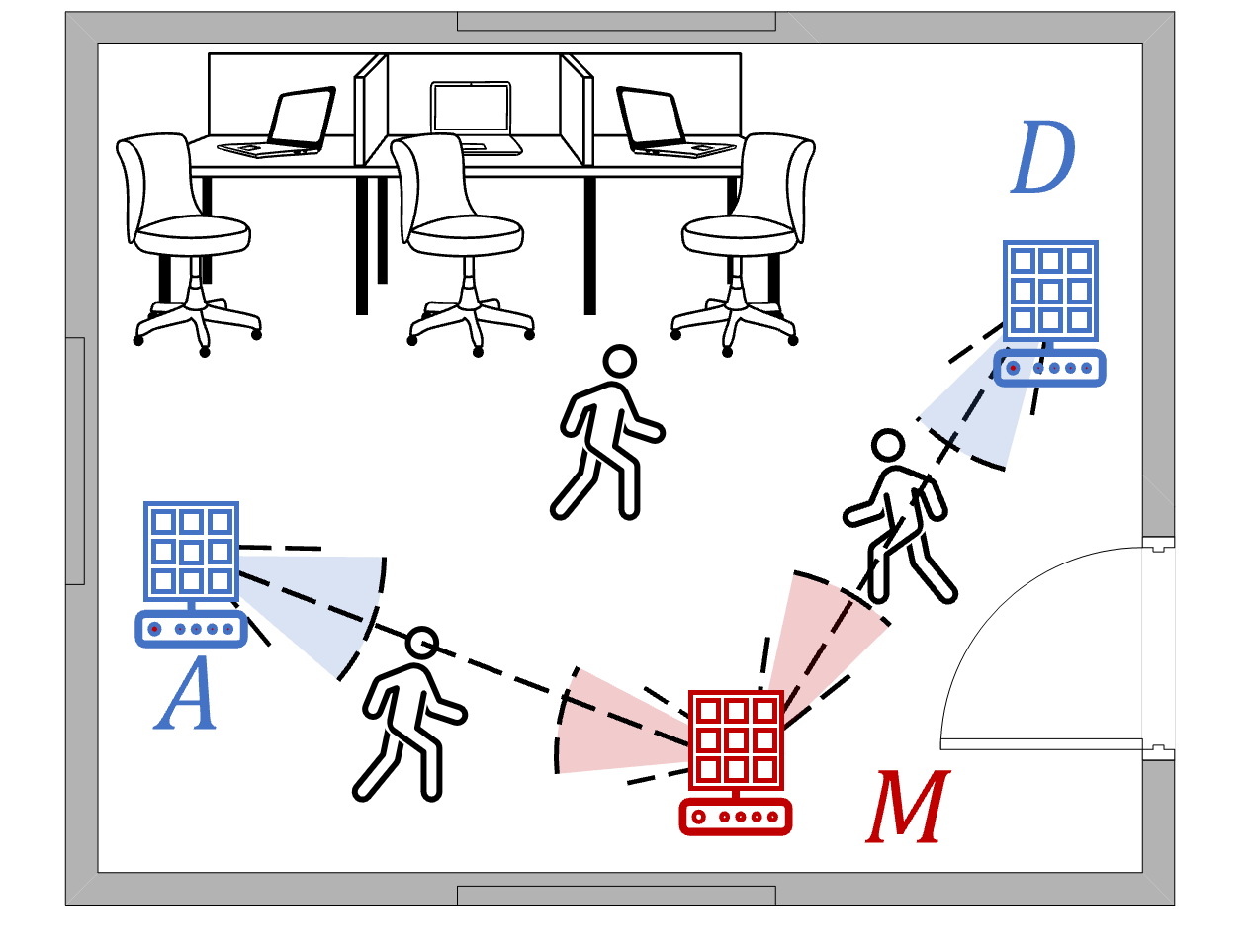} \\
  (a) Passive attack & (b) Active beam-stealing attack
\end{tabular}
\vspace{-0.1in}
\caption{Threat model (a) Mallory passively observes the same environment as $A$ and $D$, (b) Mallory launches an active beam stealing attack.}
\label{fig:threatmodel}
\vspace{-0.2in}
\end{figure}

\subsection{Threat Model}

The goal of the adversary is to learn the common secret $k$ derived by $A$ and $D$ during the pairing process. The adversary is equipped with an electronically steerable antenna that can be pointed to any direction. Most prior context-based pairing methods assume physical access control \cite{han2018you, farrukh2022one, schurmann2011secure, miettinen2014context}. The adversary is assumed to be unable to observe the environment sampled by $A$ and $D.$ For instance, the legitimate devices and the adversary are separated by a physical barrier such as a wall that makes physical properties such as light, sound, and human activity unobservable. Our threat model breaks this rather artificial assumption and allows the adversary to be in the same environment as $A$ and $D$, as shown in Figure~\ref{fig:threatmodel}(a).

The adversary has full knowledge of the context-based pairing protocol executed by the legitimate devices. However, she does not control the physical activities taking place in the environment as this is easily visually detected. Mallory could be either passively trying to derive the secret $k$ or launch an active attack.

{\bf Passive attacks:} A passive Mallory can steer her antenna to $A$ and $D$ respectively, to eavesdrop on any messages exchanged between the two parties. Moreover, Mallory can steer her antenna to any other desired direction including the location where physical activities may take place. Mallory can position herself at a location of her own choosing. This assumption is fairly strong, as the presence of Mallory in private spaces can be easily detected. However, it may hold for public spaces. Moreover, it provides the worst-case analysis on the protocol  security. Finally, Mallory may have access to a separate modality such as vision (camera) that allows her to observe the physical activities occurring in the environment. The resolution is limited by the modality used (e.g., frame rate). 
  
{\bf Active Attacks.} Mallory can inject her own signals to compromise the device pairing process. We assume that the purpose of signal injection is the compromise of the secret derived between $A$ and $D$ \cite{pan2021man} and not a denial-of-service (DoS) attack. The latter is fairly easy to launch against device pairing protocols using interference attacks such as selective-jamming.  Moreover, Mallory can launch a beam-stealing attack  that allows the implementation of a MitM adversary in directional mmWave communications \cite{steinmetzer2018beam}. The main idea of a beam stealing is shown in Figure~\ref{fig:threatmodel}(b). Mallory manipulates the beam alignment process between $A$ and $D$ by forging sector sweep frames with high RSS. $A$ and $D$ can only communicate through Mallory as their antennas are aligned to $M$. Mallory can then launch a MitM attack against the pairing process between $A$ and $D.$ We note that although several methods have been proposed to add authentication to the beam alignment process \cite{steinmetzer2018authenticating, balakrishnan2019physical, wang2021exploiting, wang2020machine}, they require prior trust to be established, or training for the purpose of fingerprinting and thus are not applicable in the absence of trust.

\section{Protocol Design}

We design a context-based secure pairing protocol suitable for mmWave communications. Security is drawn from the random physical motions in the environment (e.g., human activity).  The motions could occur naturally in high-traffic scenarios, or could be artificially induced by the user. We emphasize that although physical activity has been used as a source of randomness in prior works \cite{li2020t2pair, zhang2018tap}, it affects mmWave transmissions in a different way than microwave transmissions. In microwave frequencies, motion is observed primarily via multipath and scattering whereas in mmWave, it is primarily observed as blockage, which is path-specific. This path specificity provides an opportunity to build stronger security compared to microwave-based methods, that are not to be vulnerable to non-LoS sensing \cite{zhu2020tu}.

The legitimate devices $A$ and $D$ align their antennas over a feasible communication path using a beam alignment protocol such as the sector level sweep process in 802.11ad \cite{IEEE:802.11ad}. The random motions in the environment disrupt the $A-D$ channel in an unpredictable manner in terms of timing and duration. These disruptions are unique to the path between $A-D$ due to the high antenna directionality of mmWave communications and prevent Mallory from recording similar channel disruptions by observing the same physical activity from a different vantage point.

Some vantage points may allow the observation of similar disruptions as the $A-D$ channel. For instance, when $M$ is aligned with the $A-D$ LoS channel. To secure against these special locations, we add diversity in the paths sampled for activity. Rather than relying on a single path, $A$ and $D$ randomly hop to other available paths (e.g., through a reflection) in an uncoordinated fashion, thus randomizing the direction over which activity is observed. Even if the adversary can measure channel activity on all paths (e.g., by fast switching or deploying multiple antennas), she does not know the antenna directions for $A$ and $D$, thus being unable to infer which activity is used in key generation. Our uncoordinated path hopping method resembles frequency hopping to avoid eavesdropping and jamming \cite{strasser2008jamming}, but it occurs without a commonly shared hopping sequence and in the space domain. 

\subsection{Fingerprinting Physical Activity}
\label{sec:AFG}

The core security primitively used in our secure key extraction process is the unpredictability of how physical activity impacts a mmWave path at a fine time scale. To fingerprint physical activity, the two legitimate devices $A$ and $D$ align their antennas over a feasible communication path (e.g., the LoS path), as shown in Figure~\ref{fig:threatmodel}(a). The two devices exchange probe messages to measure the CSI amplitude independently. These measurements are processed to extract the start and end time of the physical activity, which is used to independently derive common bits.  We emphasize, that contrary to previous CSI-based methods that exploit channel reciprocity \cite{mathur2008radio}, we do not use the CSI amplitude directly for key extraction. This is because the CSI has low entropy in mmWave frequencies due to its geometric nature when the path is not disrupted and channel reciprocity breaks under motion on the sensed path. 

The main challenge lies in independently processing the CSI samples to achieve a high bit agreement rate while maintaining the entropy provided by the physical motion. We meet this challenge by applying the processing steps shown in Figure~\ref{fig:AFG}. First, the
CSI samples are denoised and smoothed to filter inherent channel fluctuations introduced by the fine nature of CSI measurement in mmWave. Then, we apply threshold detection to convert CSI sample sequences to a bitstream indicating the presence or absence of activity. Finally, we encode the bitstream to increase the entropy of the derived common secret. We describe each step in detail for device $A$. The same process is applied at $D$. 

\begin{figure*}[t]
\centering
\includegraphics[width=2\columnwidth]{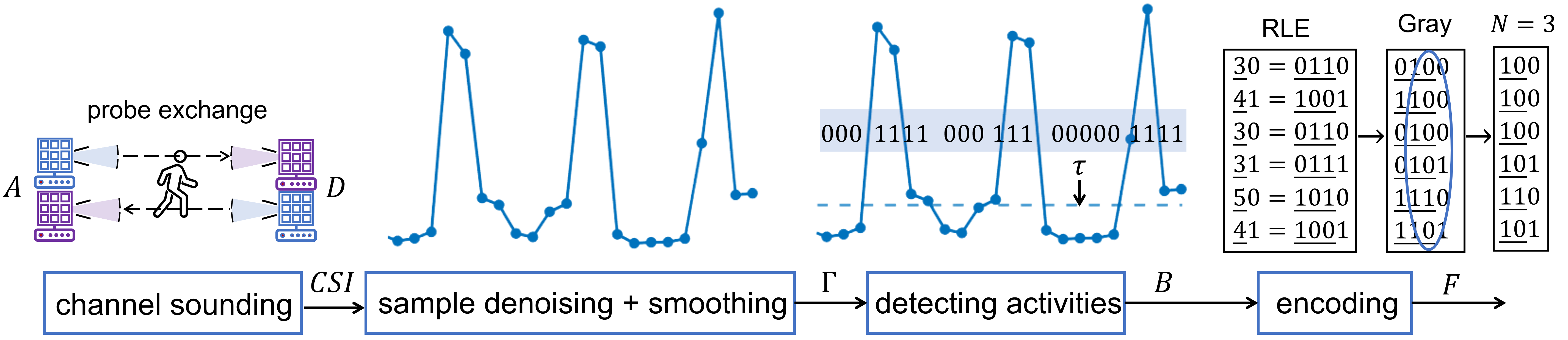}
\caption{Workflow of the activity fingerprinting process.}
\label{fig:AFG}
\vspace{-0.1in}
\end{figure*}

\textbf{1. Channel sounding.} To sense physical activity, devices $A$ and $D$ exchange probes used for channel sounding \cite{rappaport2010wireless}. The exchange of two probes completes one probing round. At the end of $N$ rounds, the CSI samples are organized into sets $CSI_A$ and $CSI_D$, respectively. We focus on set $CSI_A$ and drop the $A$ subscript for ease of notation.  The CSI set is denoted by $CSI =\{c_1,c_2,\ldots c_K\}$ and each $c_i$ is the set of samples collected per subcarrier. The probes are relatively short, containing a preamble and the unauthenticated device ID.

\textbf{2. Sample denoising and smoothing.} To reduce the noise in CSI measurements, the wavelet transform is applied on each $c_i$ \cite{yu2018passive}. As channels may be affected differently at various subcarrier frequencies, we compute the cross-correlation between the CSI amplitude samples at each subcarrier and the center subcarrier $K/2$. For two subcarriers $i$ and $j$ with CSI sample sequences $c_i$ and $c_j$ the cross-correlation is given by
\begin{align}   
 \rho (i, j)=\frac{1}{\alpha N-1}\sum\limits_{k=1}^{\alpha N}(\frac{c_i(k)-\mu_{c_i}}{\sigma_{c_i}})(\frac{c_j(k)-\mu_{c_i}}{\sigma_{c_j}})
\label{eq:CC}
\end{align}
where $\mu_{c_i}$ and $\sigma_{c_i}$ are the mean and variance of set $c_i$ and $\alpha$ is the number of CSI samples extracted per probe and per subcarrier. The sample sets $c_1, c_2,\ldots c_K$ are ranked according to their cross-correlation with the $K/2^{th}$ subcarrier and only the top half is retained. The new set of samples is denoted by $CSI =\{c_1,c_2,\ldots c_{K/2}\}$ where the set indices correspond to the ranked list of subcarriers.  

To further alleviate the high sensitivity of the CSI amplitude measurements in the mmWave channel, we apply a $W_{m}$-point moving variance operation to the CSI amplitude along each subcarrier set $c_i$. Given a set of samples $c_i=\{c_i(1),c_i(2),\cdots\,c_i((\alpha N)\}$, a $W_{m}$-point moving variance for a sample $c_i(j)$ is computed by
\[
MV(c_i(j))=\frac{1}{W_m-1}\sum\limits_{c(m)\in X_j}(c(m)-\mu_j)^2,
\]
where $X_j=\{c_i(j-\lfloor \frac{W_m-1}{2} \rfloor),\cdots,c_i(j),\cdots,c_i(j+\lfloor \frac{W_m-1}{2} \rfloor)\}$ and $\mu_j$ denotes the mean of $X_j$. We opt to use CSI amplitude variance to capture changes in the CSI amplitude because they are indicative of channel blockage. The moving variance operator is applied to smooth the rapid fluctuations. In the next step, the $W_m$-point moving variance is averaged over the $K/2$ highest correlated subcarriers to produce a single time series of CSI amplitude, denoted as $\mathcal{T}=\{c(1), c(2),\ldots, c(\alpha N)\}.$ Finally, to cope with the lack of perfect channel reciprocity and the sensitivity of the mmWave channel to small geometry changes \cite{zhang2015coverage}, an averaging window of $W_{s}$ is applied across $\mathcal{T}$ to downsample it by a factor of $W_s$. The final CSI sequence is denoted by $\Gamma$ and has a size of $\lfloor \alpha N/ W_s \rfloor$ samples.

\textbf{3. Detecting Activity.} Physical activity on a time series $\Gamma$ is inferred by using threshold-based detection \cite{rousseeuw1993alternatives}. The process generates a bit sequence $B$ by comparing every sample in $\Gamma$ with a detection threshold $\tau$:
\begin{align*}
& b(i) =\left\{
  \begin{aligned}
    1, &\quad \gamma(i) \geq \tau, \\
    0, &\quad \text{otherwise}. \\
  \end{aligned}
\right.
\end{align*}
The threshold $\tau$ is defined as the average over the $\ell$ lowest CSI amplitude samples in $\Gamma$, where $\ell$ corresponds to a short time duration (e.g., one second). Given the time scale of physical motion, this duration will correspond to a period of inactivity.  

\textbf{4. Encoding.} Due to the high sampling rate relative to the speed of any physical motion, bit sequence $B$ consists of many consecutive zero bits (inactivity) followed by consecutive one bits (activity), thus having relatively low entropy. The random part of the physical motion sensed by the probes lies in the start time and duration of the channel disruption. To encode that information alone, we apply run-length encoding (RLE) to transform sequence $B$ into blocks. In RLE, each block of consecutive bits (zeros or ones) is encoded to a counter representing the block length (duration), followed by the block's bit value (activity or inactivity). An example of the application of RLE to an example $B$ is given below (the underlined bits represent the block counter). 
\[
\begin{tabular}{rcccccc}
$B$ =
&000000&1111&000&111&00000&1111\\
RLE[$B$] =
&\underline{6}0 &\underline{4}1 &\underline{3}0 &\underline{3}1 &\underline{5}0 &\underline{4}1\\
 =
&\colorbox{lightgray}{\underline{110}0}
&\colorbox{lightgray}{\underline{100}1} 
&\colorbox{lightgray}{\underline{011}0} 
&\colorbox{lightgray}{\underline{011}1}
&\colorbox{lightgray}{\underline{101}0}
&\colorbox{lightgray}{\underline{100}1}\\
\end{tabular}
\]
Additionally, we apply Gray encoding to reduce the mismatch between the sequences generated by $A$ and $D$ due to slight mismatches in the block lengths. The number of bits used in Gray code is determined by the longest consecutive bits counted by RLE. In the example, the longest consecutive bit is 6. So we use a 3-bit Gray code in each block. In Gray encoding, two successive counters differ by exactly one bit. The Gray-encoded sequence for the example above is represented by $S$:
\[
\begin{tabular}{ccccccc}
$S$ =
&\colorbox{lightgray}{\underline{101}0}
&\colorbox{lightgray}{\underline{110}1}
&\colorbox{lightgray}{\underline{010}0}
&\colorbox{lightgray}{\underline{010}1}
&\colorbox{lightgray}{\underline{111}0}
&\colorbox{lightgray}{\underline{110}1}
\end{tabular}
\]
The final step is to truncate each block to the same length, so that events contribute equally to the final sequence. We achieve this by retaining $N$ least significant bits (LSBs) from each gray-coded block. We retain LSBs because they are harder to predict and thus increase the entropy of the sequence. The final sequence upon the application of truncation of $N=3$ in the running example is 
\[
F=010~101~100~101~110~101.
\]
Sequence $F$ corresponds to the final bit sequence that will be used as an activity fingerprint in the agreement to a common key. At the end of the independent CSI measurement processing, each device $A$ and $D$ has computed $F_A$ and $F_D$, respectively.

\subsection{The Jellybean Pairing Protocol}

We integrate the activity fingerprinting process into a secure device pairing protocol called {\em Jellybean.} The Jellybean protocol allows  $A$ and $D$ establish a common secret $k$ and prevents Mallory from learning $k$. Moreover, it provides authentication through presence. If $M$ were to execute the same protocol with $A$ (either asynchronously or in parallel), she would establish a different secret. The protocol consists of the activity fingerprinting (AF) phase followed by the key agreement phase, which is a standard fuzzy-commitment based key agreement \cite{juels1999fuzzy}. The two phases are illustrated in Figure~\ref{fig:diagram} and described below.

\noindent
\textbf{\textit{Initialization}}
\begin{enumerate}

  \item $A$ and $D$ align their antennas using a beam alignment protocol \cite{IEEE:802.11ad}.
   \item $A$ and $D$ agree on parameters $W_m, W_s,N$, and  the Reed-Solomon code $RS$. The parameters  could also be preset. 
\end{enumerate}

\noindent
\textbf{\textit{Activity Fingerprinting ($AF$)}} 
\begin{enumerate}
\setcounter{enumi}{4}
  \item $D$ and $A$ execute probing rounds and collect $CSI_D$ and $CSI_A$, respectively.
  
  \item $D$ and $A$ generate fingerprints $F_D$ and $F_A$, respectively, using the activity fingerprinting process described in Section \ref{sec:AFG}.
\end{enumerate} 

\noindent
\textbf{\textit{Key Agreement}}
\begin{enumerate}
\setcounter{enumi}{6}
  \item $A$ generates a random secret key $k$ and creates  commitment $\delta_A=F_A \ominus ENC_{RS}(k)$, where $\ominus$ denotes subtraction in a Galois Field of $\mathbb{F}_q^n$, and $ENC_{RS}(\cdot)$ is an RS encoder. $A$ appends $h(k)$ to $\delta_A$ and sends the resulting value to $D$, where $h(\cdot)$ is a secure hash function.

  \item Upon receiving the commitment $\delta_A$ and $h(k)$, $D$ opens the commitment by decoding secret $k'=DEC_{RS}(F_D \ominus \delta_A)$ using its independently derived fingerprint $F_D$, where $DEC_{RS}(\cdot)$ is the RS decoding function. $D$ then verifies $h(k') \overset{?}{=} h(k)$. If the verification succeeds, $D$ accepts the secret. Otherwise, it aborts the pairing process.
\end{enumerate}

The key agreement phase is an application of a fuzzy commitment scheme \cite{juels1999fuzzy} that has been widely used in context pairing \cite{farrukh2022one, han2018you, miettinen2014context}.
It can reconcile two similar sequences that differ up to the error correction capability of the selected error correcting code.  To conceal the secret $k$, $A$ computes $\delta_A=F_A \ominus ENC_{RS}(k)$ which reveals no information about $k$ ($k$ is pseudo-random and $F_A$ is unknown to $M$.  Upon receiving the commitment $\delta_A$, $D$ can open the commitment by applying $k'=DEC_{RS}(F_D \ominus \delta_A)$ and recover $k'$.  The value $k'$ is equal to $k$ only if the Hamming distance between $F_D$ and $F_A$ is below the error-correcting capability of the RS code. Finally, the integrity of $k'$ is checked using the attached hash $h(k).$

\begin{figure}[t]
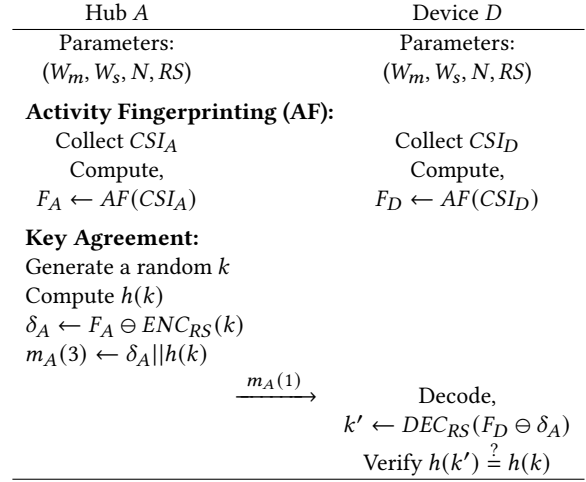

\centering
\scalebox{1}{
\begin{tabular}{ccc}
    Hub $A$& &  Device $D$\\
    \hline
    Parameters: & & Parameters:\\
    ($W_{m},W_{s},N, RS$)& &($W_{m},W_{s},N, RS$)\\[1ex]
                
    \multicolumn{3}{l}{{\bf Activity Fingerprinting (AF):}}\\
    Collect $CSI_A$ &  & Collect $CSI_D$\\
    Compute, & & Compute,\\
    $F_A\leftarrow AF(CSI_A)$ & & $F_D\leftarrow AF(CSI_D)$\\[1ex]

    \multicolumn{3}{l}{{\bf Key Agreement:}}\\
    
    \multicolumn{3}{l}{Generate a random $k$}\\
    \multicolumn{3}{l}{Compute $h(k)$}\\
    \multicolumn{3}{l}{$\delta_A\leftarrow F_A\ominus ENC_{RS}(k)$}\\
    
    $m_A(3)\leftarrow \delta_A||h(k)$ & &\\
    
    &$\xrightarrow{~m_A(1)~}$& Decode,\\ 
    
    &&$k'\leftarrow DEC_{RS}(F_D\ominus \delta_A)$\\  
    
    &&Verify $h(k^\prime)\overset{?}{=}h(k)$\\
    \hline
\end{tabular}}
\caption{The Jellybean pairing protocol.}
\label{fig:diagram}
\vspace{-0.2in}
\end{figure}

\section{Parameter Selection and Security Evaluation}

In this section, we show how to select the parameters of Jellybean to achieve correctness (secure pairing of a legitimate device) and soundness (leakage of $k$ to Mallory). We further experimentally evaluate Jellybean in terms of performance and security.

\subsection{Evaluation Metrics}
\label{sec:metrics}

To evaluate the security and performance of Jellybean, we employ the following metrics that are commonly used for context-based pairing methods.

\smallskip
 {\bf Bit Mismatch Rate (BMR):} the ratio of the Hamming distance between two fingerprints over the fingerprint length. The BMR  between $F_A$ and $F_D$ is used as a metric for protocol correctness, whereas the BMR between $F_A$ and $F_M$ is used as a metric for protocol soundness.

{\bf Secret Bit Rate (SBR):} the average number of bits extracted per any activity interrupting the $A$-$D$ channel. Here, we define the SBR per activity as opposed to per unit of time, because the activity rate could vary depending on the environment. 

{\bf Approximate Entropy (ApEn):} A measure of the randomness and unpredictability of a time series. It is preferred over entropy  because it provides a more accurate measure of randomness for finite-length series \cite{pincus1991approximate}.
 
The Jellybean protocol parameters involve tradeoffs between security and performance. For example, the RS code needs to be carefully selected to allow legitimate devices to pair but prevent Mallory from inferring the common key. We study these tradeoffs by co-evaluating correctness and soundness.

\subsection{Experiment Setup}

\begin{figure}[t]
\begin{tabular}{c}    
\includegraphics[width=0.7\columnwidth]{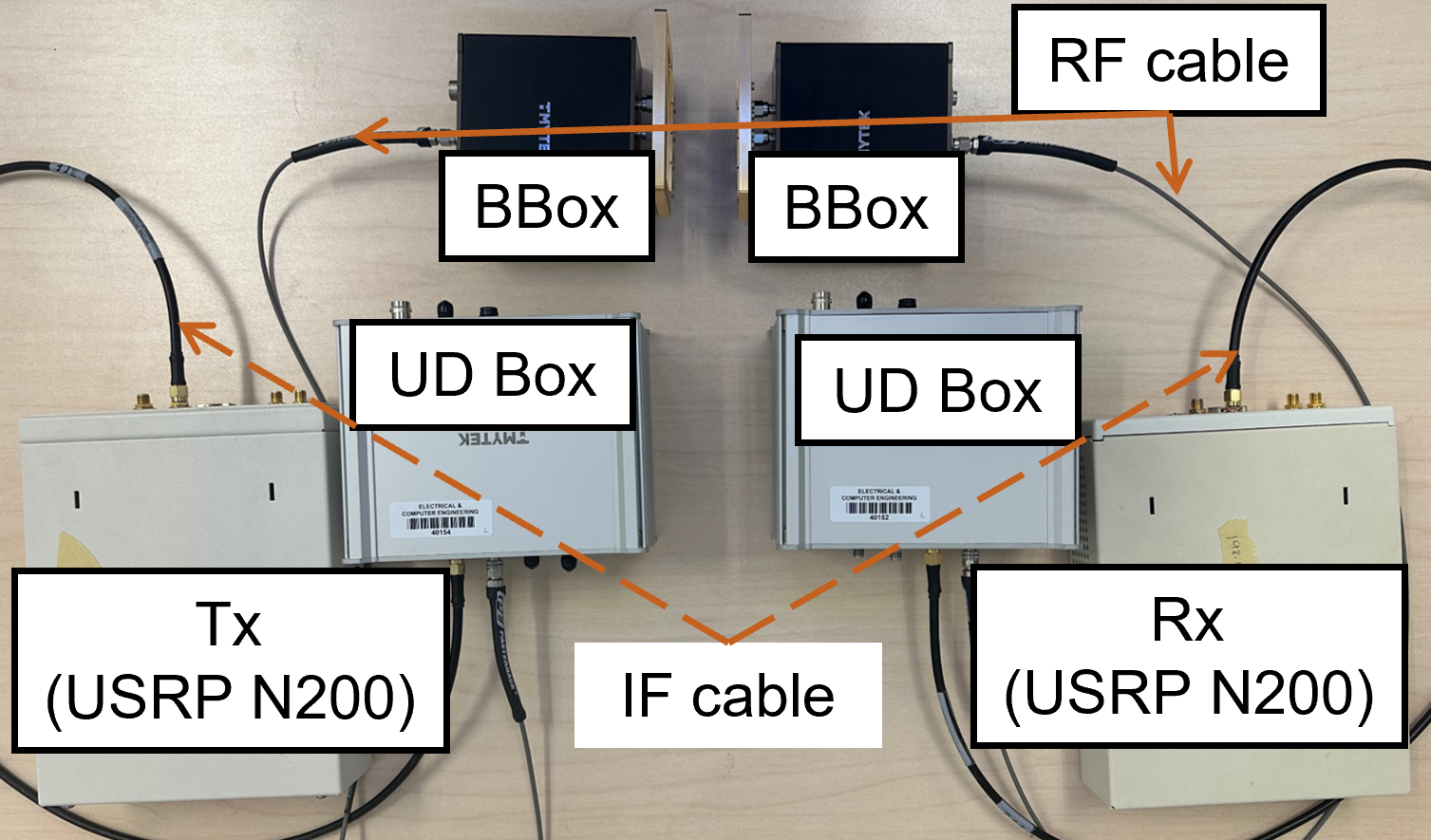}\\
(a) device connection of each link\\ \includegraphics[width=0.7\columnwidth]{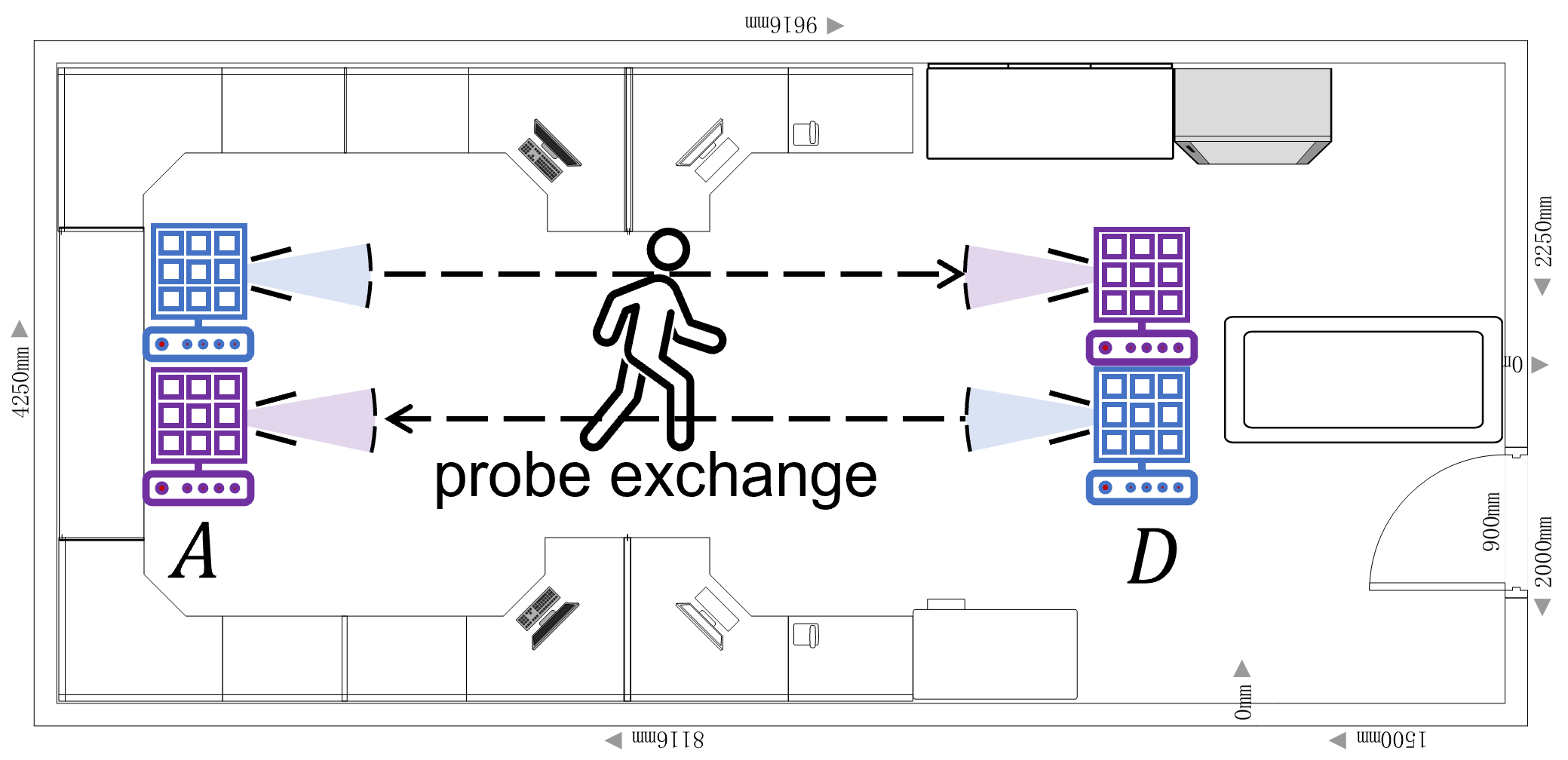}\\
(b) Room A, 438 sq.ft\\
\includegraphics[width=0.7\columnwidth]{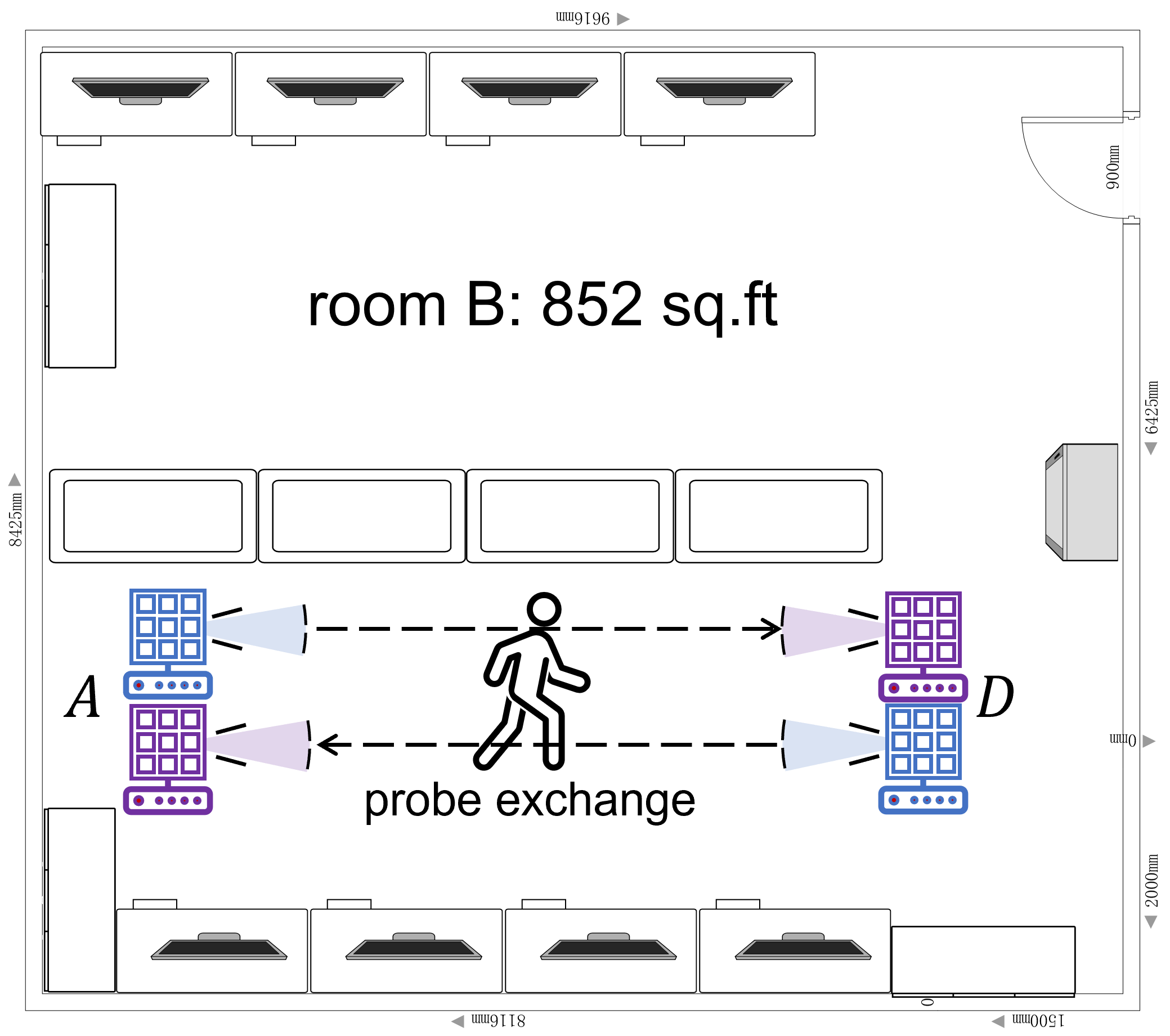} \\
(c) Room B, 852 sq.ft
\end{tabular}
\vspace{-0.1in}
\caption{Experiment setup and topologies.}
\label{fig:exp_set}
\vspace{-0.15in}
\end{figure}

Our experimental setup consisted of USRPs and  connected to the TMYTEK mmWave platform \cite{TMYTEK}. Specifically, we implemented the baseband functionality of $A$, $D$ and $M$ on USRP N200s. Each USRP was connected to a TMYTEK up-down (UD) converter that upscaled the signal from the sub-6GHz band to a center frequency of 28GHz with 160MHz bandwidth. Transmission took place via a directional BBox lite patch antenna with 16 elements. The antenna supports electronically steerable sectors with half-power beamwidth of $30^\circ$, covering the plane with 12 non-overlapping sectors. The maximum transmit power of the USRPs was 30dBm.  On the receiver side, the captured 28 GHz signal was down-converted and processed by the USRP. The platform configuration is shown in Figure~~\ref{fig:exp_set}(a).

The exchanged probes consisted of 1,460 bytes and were transmitted using a 5MHz bandwidth over 52 subcarriers. The sampling rate for the USRPs was set to 5 MHz enabling the collection of 3,100 CSI samples for each subcarrier per second. The probing process for CSI measurements requires rapid and continuous switching between signal transmission and reception. Due to the half-duplex capabilities of  USRPs N200 and their slow switching capabilities, we performed the probing  between $A$ and $D$ using two horizontally-aligned links, as shown in  Figures~\ref{fig:exp_set}(b) and \ref{fig:exp_set}(c), where each mmWave communication link consists of the setting in Figure~\ref{fig:exp_set}(a). Figures~\ref{fig:exp_set}(b)  and \ref{fig:exp_set}(c) also show the layout of the two office rooms where we conducted our experiments. 
 
We studied both artificial and natural motions within the environment. To create artificial motions in the $A$-$D$ channel, we waived a hand vertically across one of the antennas to disrupt the channel. The hand was waived at natural speeds and also horizontally to disrupt any paths beyond the LoS. The artificial motion reflects a scenario where a user is instructed to waive his hand across the device that he intends to pair and provides a usable and quick way for device pairing. 

For natural motions, the office occupants walking through the space disrupting different paths at different times. The natural motion reflects a scenario where the natural movement in the space is exploited to pair device as well as continuously renew trust.

\subsection{Parameter Selection}
\label{sec:parameter}

\begin{figure}[t]
\centering
\setlength{\tabcolsep}{-5pt}
\begin{tabular}{cc}
\includegraphics[width=0.55\columnwidth]{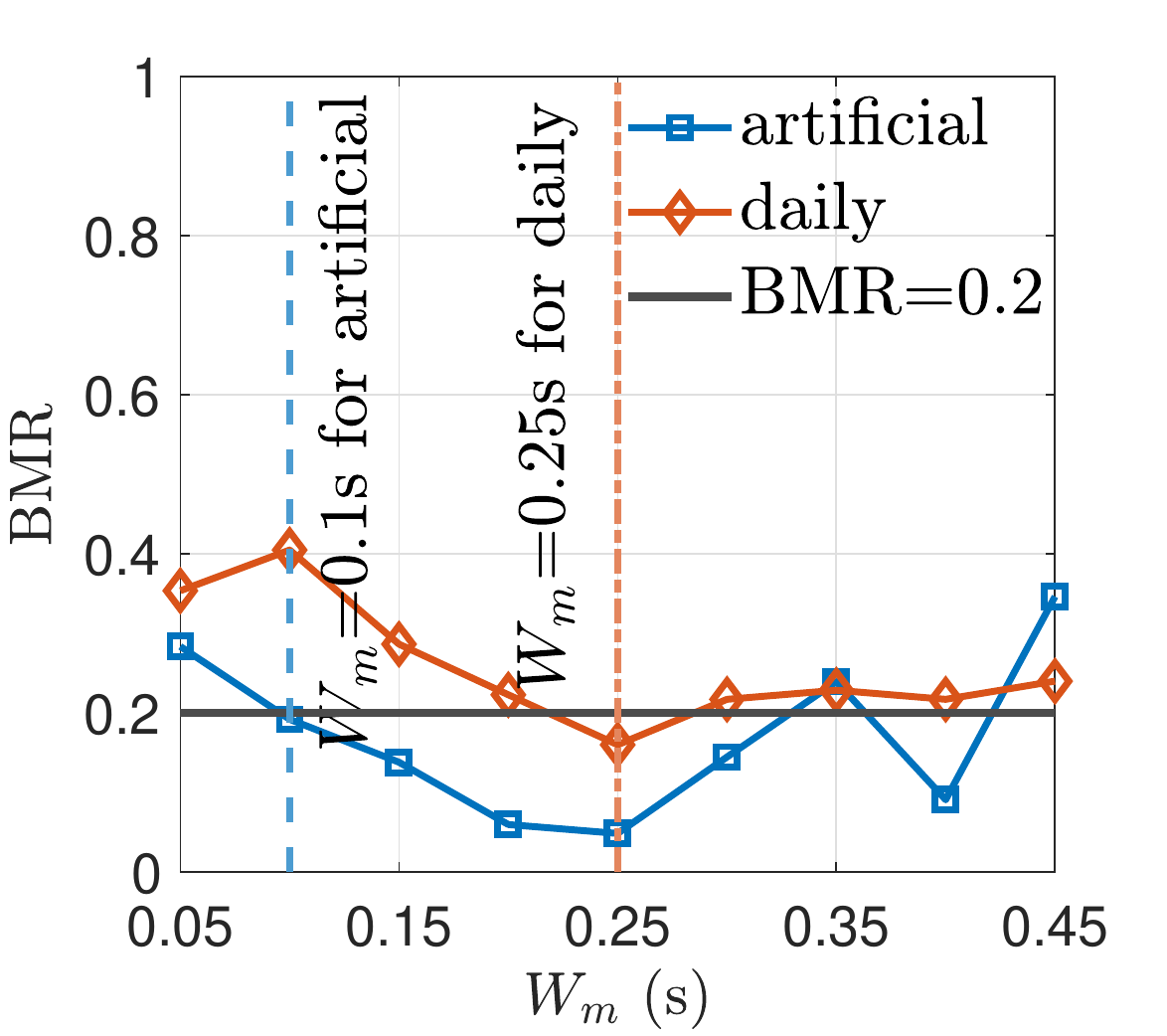}&
\includegraphics[width=0.55\columnwidth]{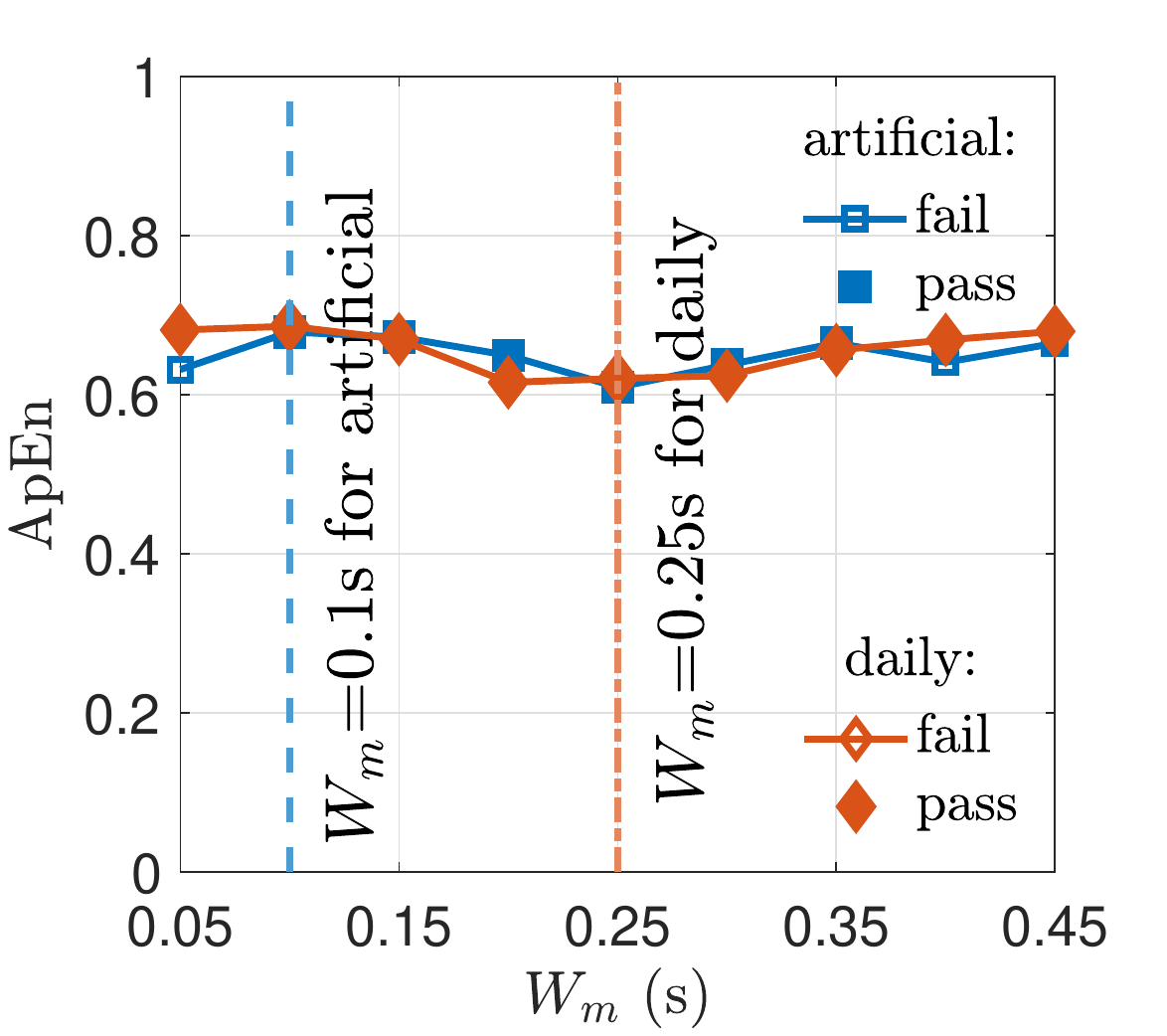} \\
(a) & (b)
\end{tabular}
\vspace{-0.1in}
\caption{ (a) BMR as a function of $W_m$ and (b) ApEn as a function of  $W_m$.}
\label{fig:parameter1}
\vspace{-0.2in}
\end{figure}

\begin{figure*}[t]
\centering
\setlength{\tabcolsep}{-7pt}
\begin{tabular}{ccc}  
  \includegraphics[width=0.78\columnwidth]{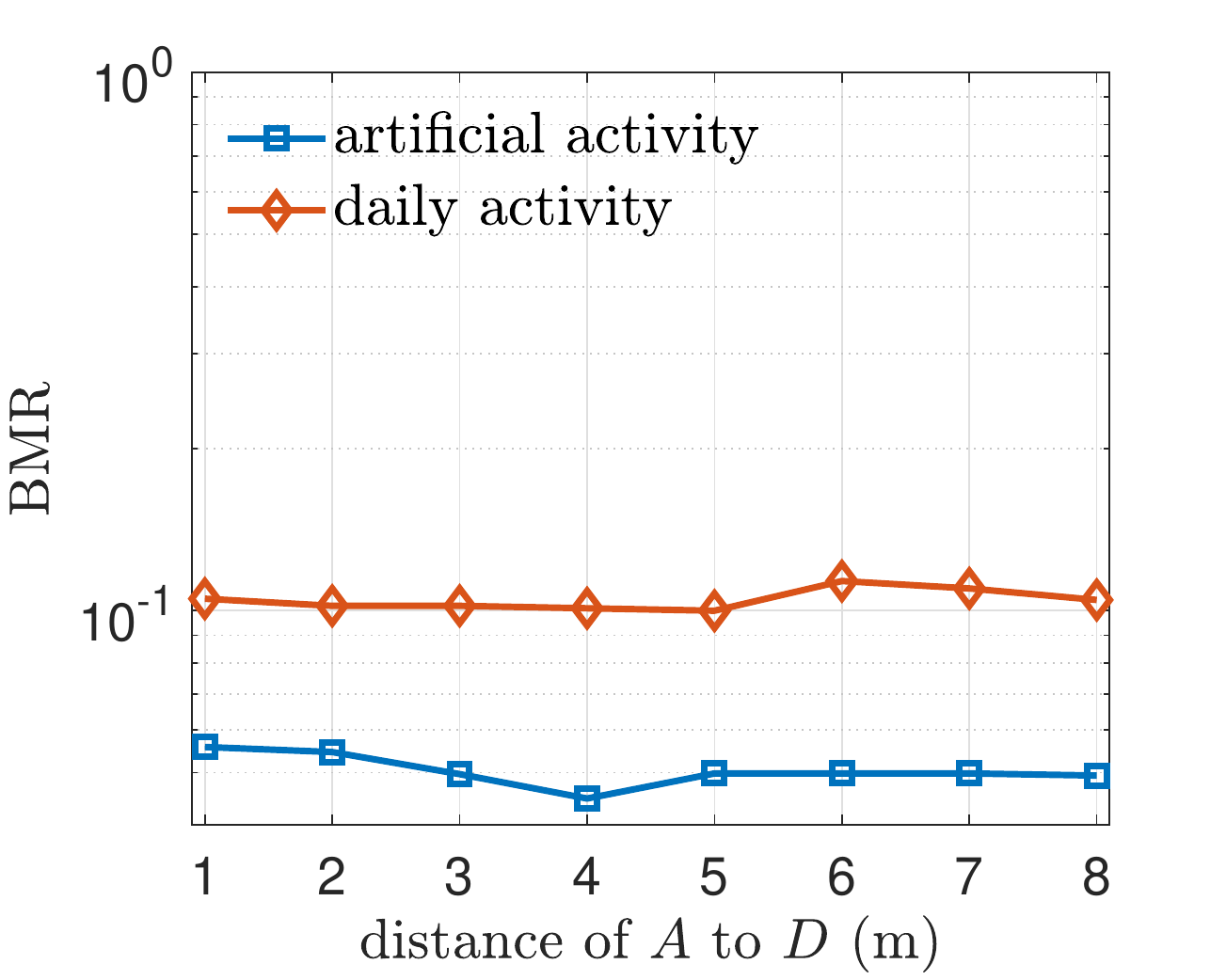}&
  \includegraphics[width=0.78\columnwidth]{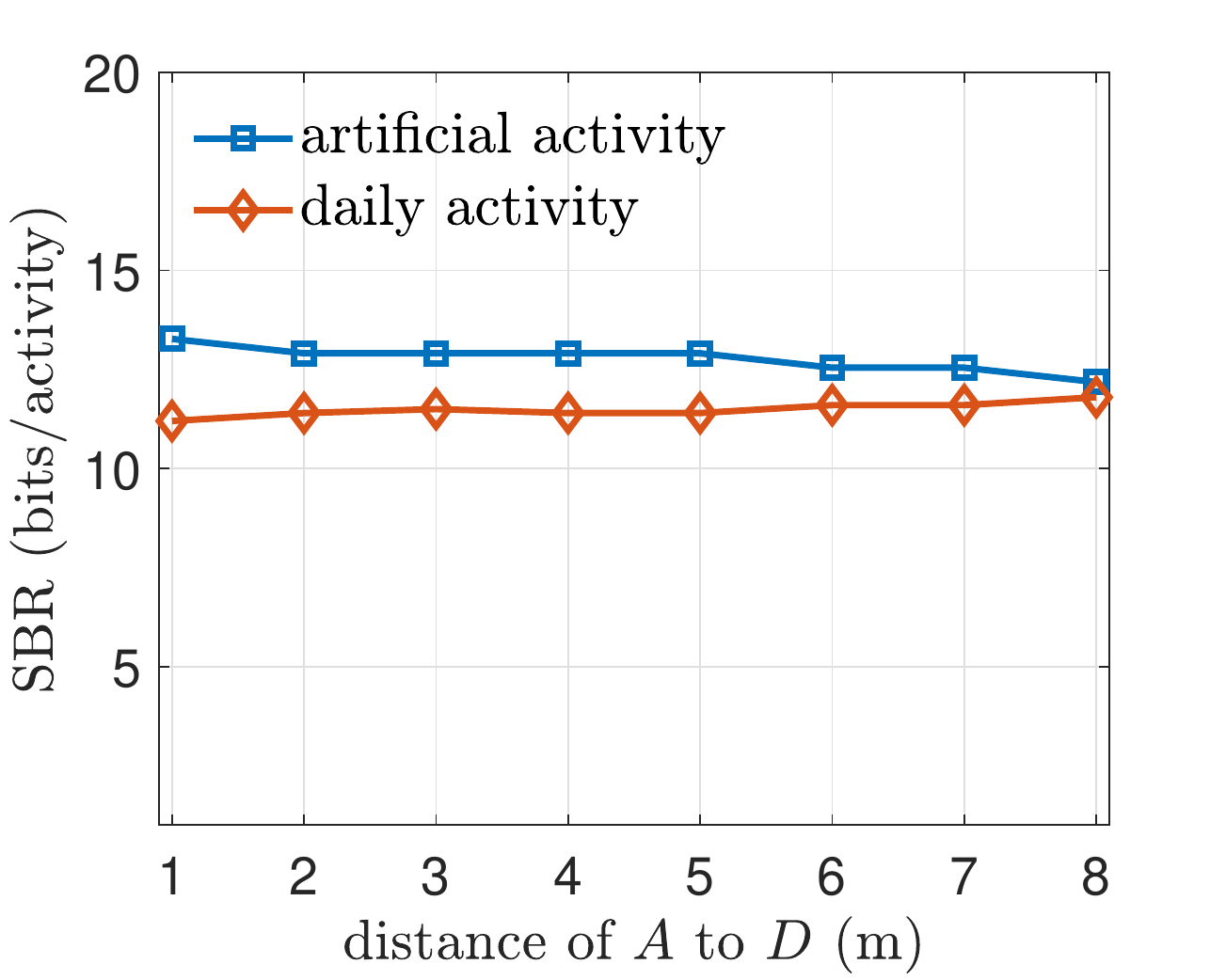}&
  \includegraphics[width=0.78\columnwidth]{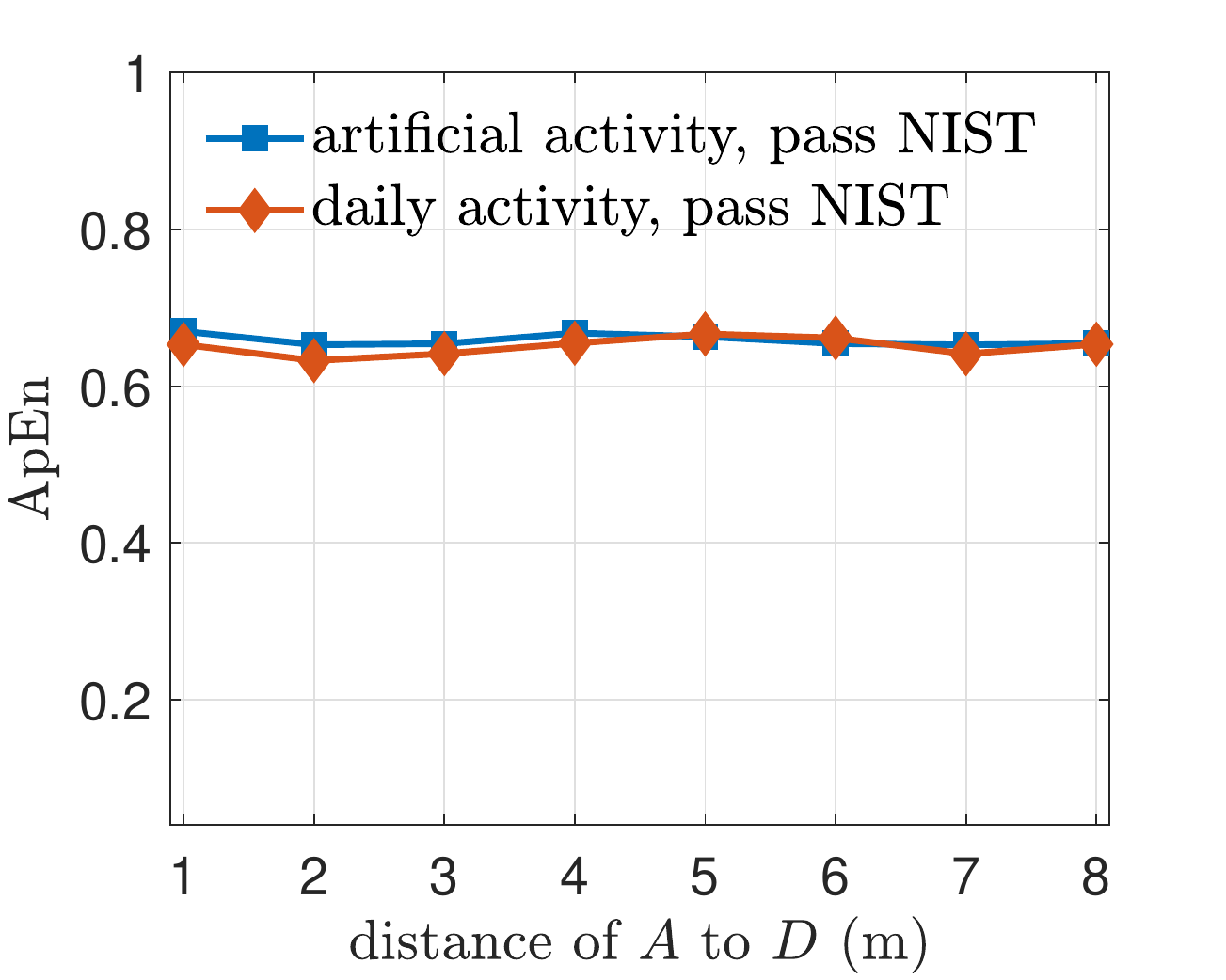}
\end{tabular}
\vspace{-0.15in}
\caption{Performance evaluation of Jellybean with two types of activities in room B.}
\label{fig:dis_path}
\vspace{-0.1in}
\end{figure*}

The Jellybean protocol is controlled by three parameters: the moving variance window $W_{m}$, the averaging window $W_{s}$,  the number of retained LSBs $N$ in each RLE block, and the RS code for key agreement. In this section, we show how
to select these parameters in practice and then evaluate the Jellybean protocol under different attack scenarios.  The CSI samples used for parameter selection are collected in 90 second intervals, during which the LoS path of $A$-$D$ channel in room A is interrupted by the occupant with artificial and daily activities. The approximate entropy of the fingerprint is calculated using the NIST randomness test suite \cite{bassham2010sp}. Moreover, we obtained the $p$-value from the NIST tests which indicates if a sequence is sufficiently random to be used for security purposes. The $p$-value must exceed 0.01. For all the ApEn plots, we utilize empty markers to represent results with a $p$-value lower than 0.01, which indicates a failure in the NIST randomness test. Conversely, filled markers indicate a successful test.

{\bf Selecting $W_{m}$.} The moving variance window $W_m$ alleviates the high sensitivity of the CSI amplitude measurements. It needs to be long enough to smooth the rapid fluctuations, but should not eliminate  the CSI randomness. Figure~\ref{fig:parameter1} presents the BMR and ApEn as a function of $W_{m}$ for both artificial and natural daily activities. To simplify the analysis, $W_{m}$ is converted from the number of CSI samples to seconds based on the sampling rate. We choose the smallest $W_{m}$ value that results in a BMR of less than 0.2 and an ApEn with $p$-values exceeding 0.01. As a result, we set $W_{m}$ to 0.1s for artificial activities and 0.25s for daily activities. This selection balances the functionality and security of the Jellybean protocol while ensuring a sufficiently random fingerprint is generated.

\begin{figure}[t]
\centering
\setlength{\tabcolsep}{-5pt}
\begin{tabular}{cc}
\includegraphics[width=0.55\columnwidth]{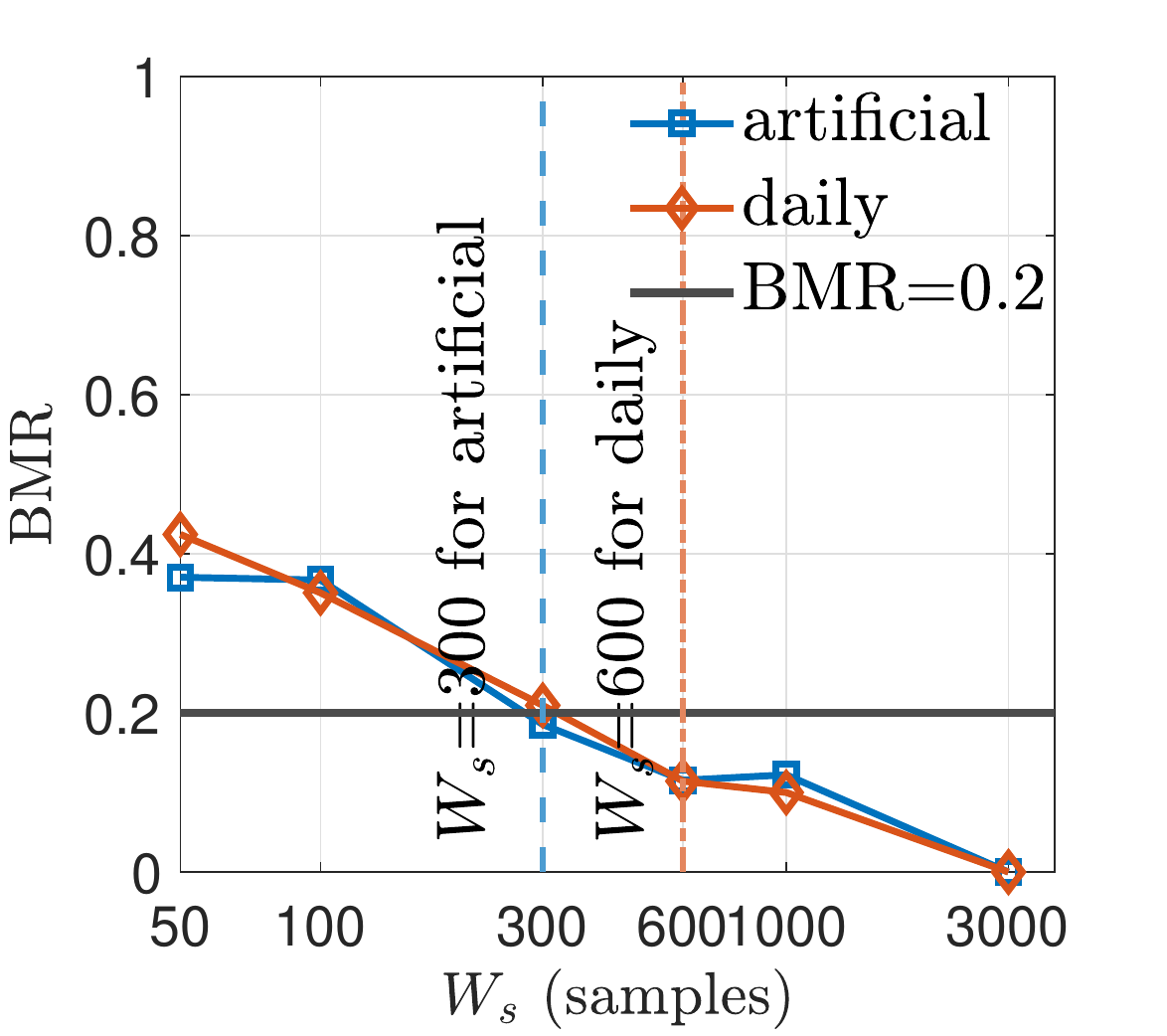}&
\includegraphics[width=0.55\columnwidth]{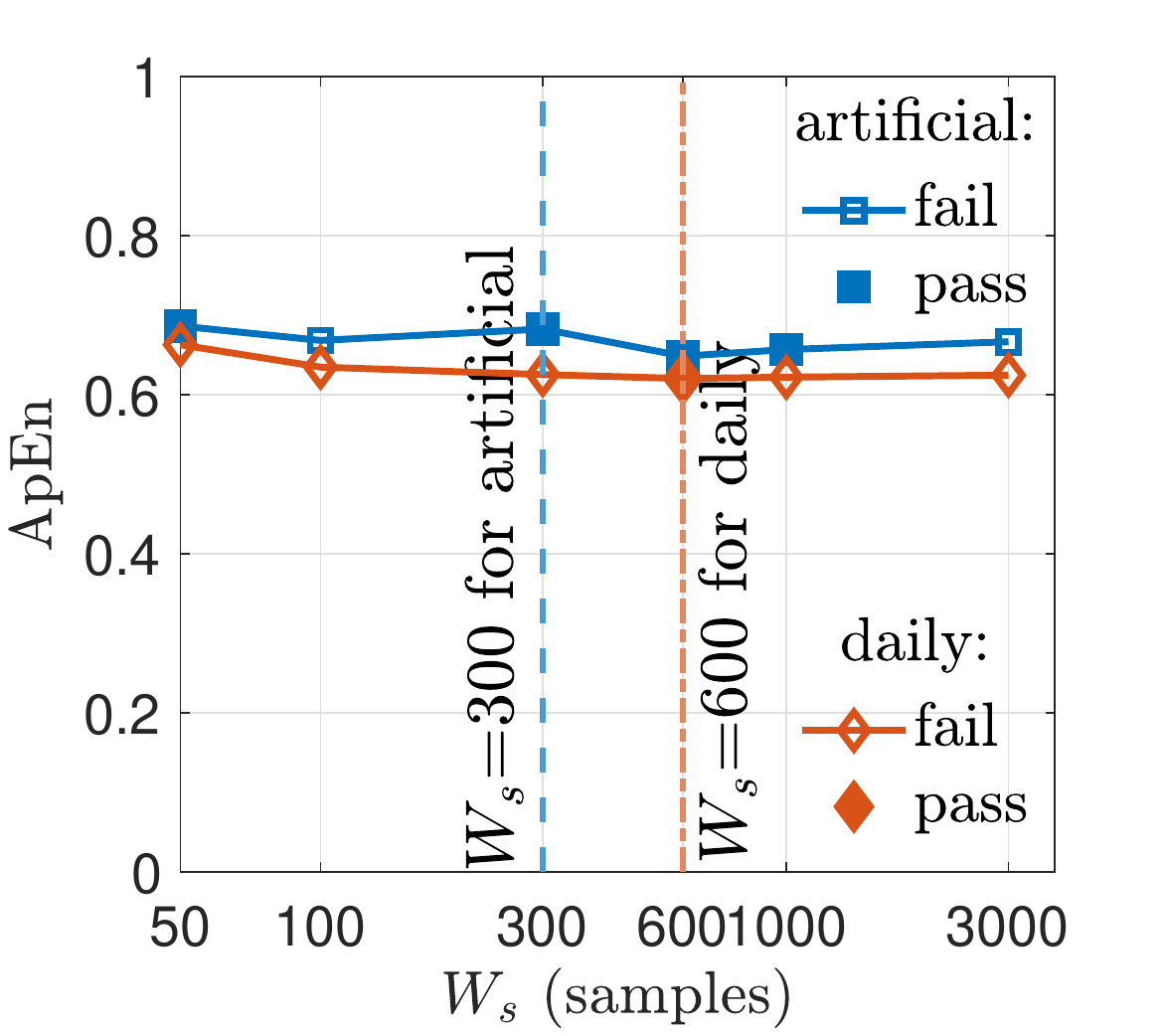} \\
(a) & (b)
\end{tabular}
\vspace{-0.1in}
\caption{ (a) BMR as a function of $W_s$, and (b) ApEn as a function of $W_s$.}
\label{fig:parameter2}
\vspace{-0.2in}
\end{figure}

{\bf Selecting $W_{s}$.} The averaging window $W_{s}$ provides an additional level of smoothing to account for imperfect channel reciprocity and the sensitivity of the mmWave channel to small changes in geometry. It is also of great significance to avoid prolonging $W_{s}$ that can result in a predictable fingerprint. In Figure~\ref{fig:parameter2}, we show the BMR and ApEn as a function of $W_s$. We observe that the BMR monotonically decreases as $W_s$ increases. However, compared to $W_m$, the Jellybean protocol is more sensitive to changes in $W_{s}$, and the generated fingerprint can easily become predictable, failing the NIST randomness test for many $W_{s}$ values. We choose the smallest $W_{s}$ value that yields a BMR below 0.2 and an ApEn with a $p$-value greater than 0.01, indicating a usable fingerprint. Consequently, we let $W_s=300$ and $W_s=600$ for artificial and daily activities, respectively.

\begin{figure}[t]
\centering
\setlength{\tabcolsep}{-5pt}
\begin{tabular}{cc}
\includegraphics[width=0.55\columnwidth]{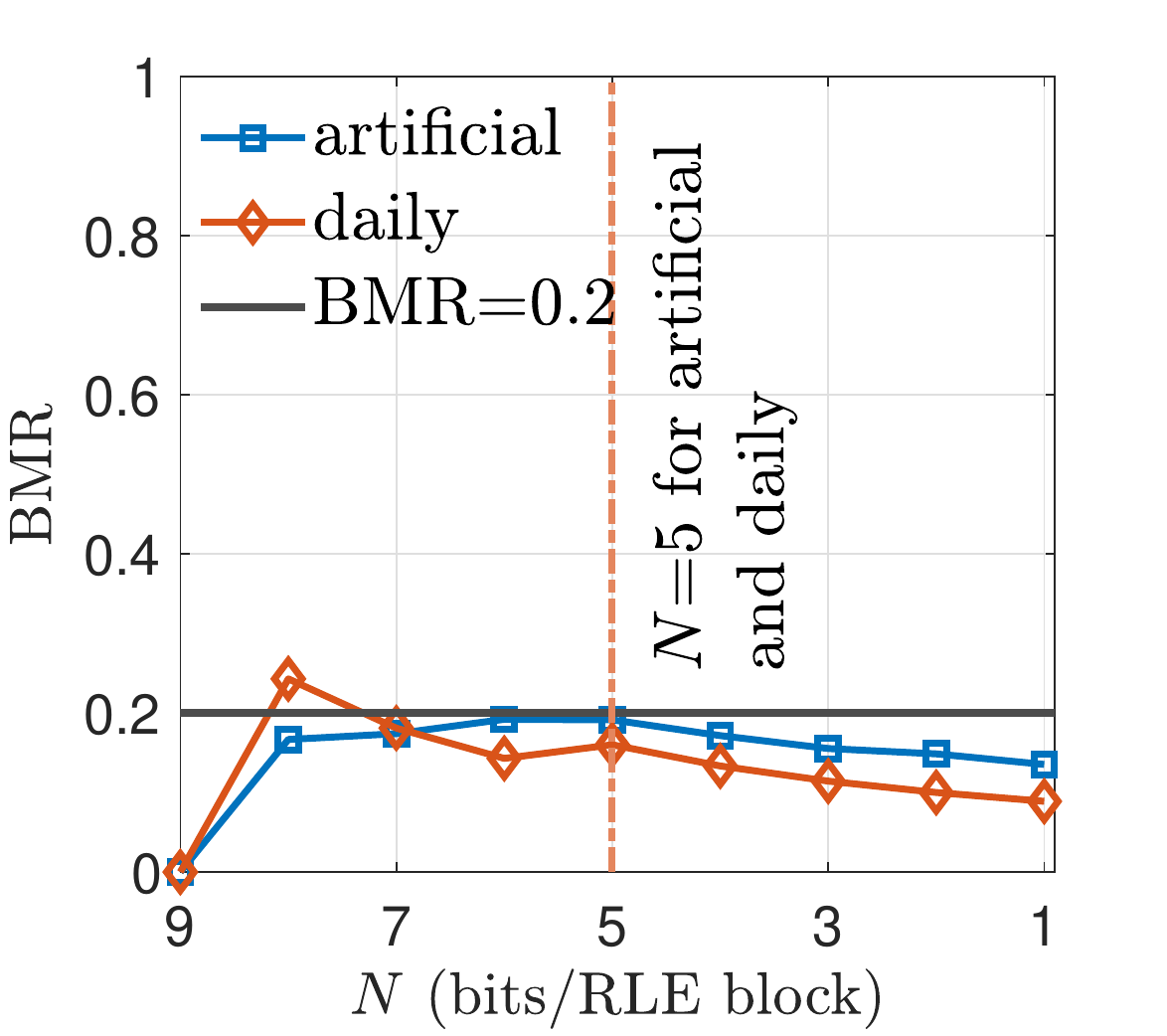}&
\includegraphics[width=0.55\columnwidth]{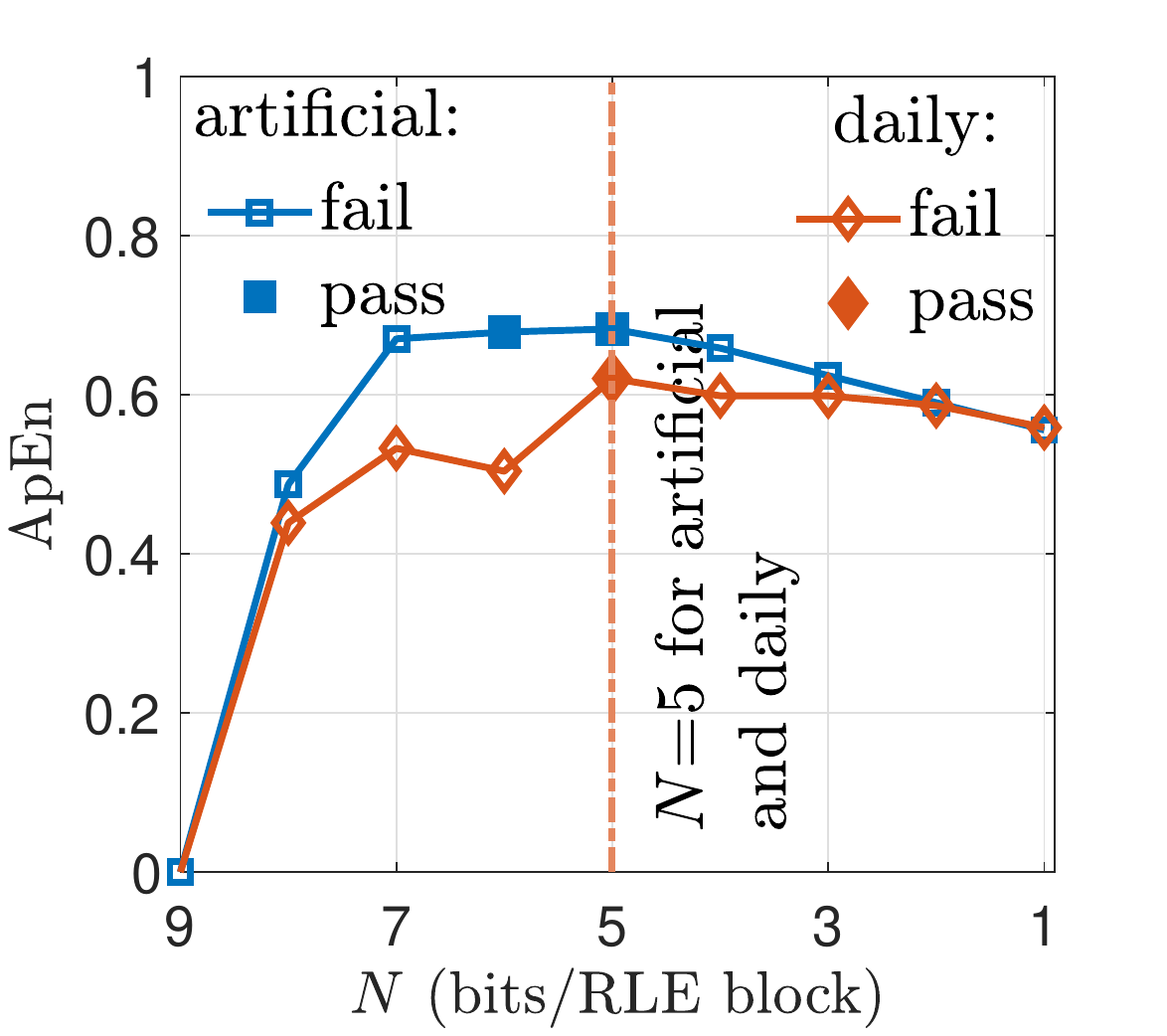} \\
(a) & (b)
\end{tabular}
\vspace{-0.1in}
\caption{(a) BMR as a function of $N$, and (b) ApEn as a function of $N$.}
\label{fig:parameter3}
\vspace{-0.2in}
\end{figure}
{\bf Selecting $N$.} The number of LSBs maintained from each CSI block control the contribution of each event (activity or inactivity) to the final bit sequence used as fingerprint. The selection of $N$ also impacts the entropy of the final bit sequence as a longer $N$ could lead to more predictable sequences. Figure~\ref{fig:parameter3} shows how BMR and ApEn vary with $N$. As expected, a smaller $N$ leads to a higher BMR but also a higher ApEn. To best balance the functionality and security of the Jellybean protocol, we fix $N=5$ for both artificial and daily activities.

{\bf Selecting the RS code.} The final parameter to be selected is the RS code used in the key agreement phase. The code's error correcting capability must be high enough to allow for the observed BMR, but low enough so that $k$ is not leaked to Mallory. Based on the previously selected parameters and the obtained BMR values, we considered an RS code that tolerated a 20\% mismatch between the two fingerprints. This is inline with codes used in previous context-based methods.

\subsection{Performance Evaluation of Jellybean}
In this section, we evaluate the performance of Jellybean protocol using the parameters set above. To test the protocol robustness to environmental changes, we conducted additional experiments in room B. Room B has a larger space and a more complicated RF environment compared to room A. The layout of Room B is shown in Figure \ref{fig:exp_set}(c).
We conducted additional experiments in room B, shown in Figure~\ref{fig:exp_set}(c), to illustrate varying $A$-$D$'s distance impact on evaluation metrics BMR, SBR, and ApEn. We varied the distance between $A$ and $D$ from 1m to  8m. For each distance, we collected CSI samples for three minutes while performing artificial and natural daily activities. Figure~\ref{fig:dis_path} shows the BMR, SBR, and ApEn at different distances. We observe a sound trend of the three metrics, which indicates that the $AF$ algorithm can develop activity-based fingerprints with a low BMR, an acceptable SBR, and NIST-approved ApEn.

\subsection{Security Analysis for Jellybean}
\label{sec:security}
In this section,we analyze the security of our Jellybean protocol against different types of adversaries.

\begin{figure}[t]
\centering
\setlength{\tabcolsep}{2pt}
\begin{tabular}{cc} 
\includegraphics[width=0.5\columnwidth]{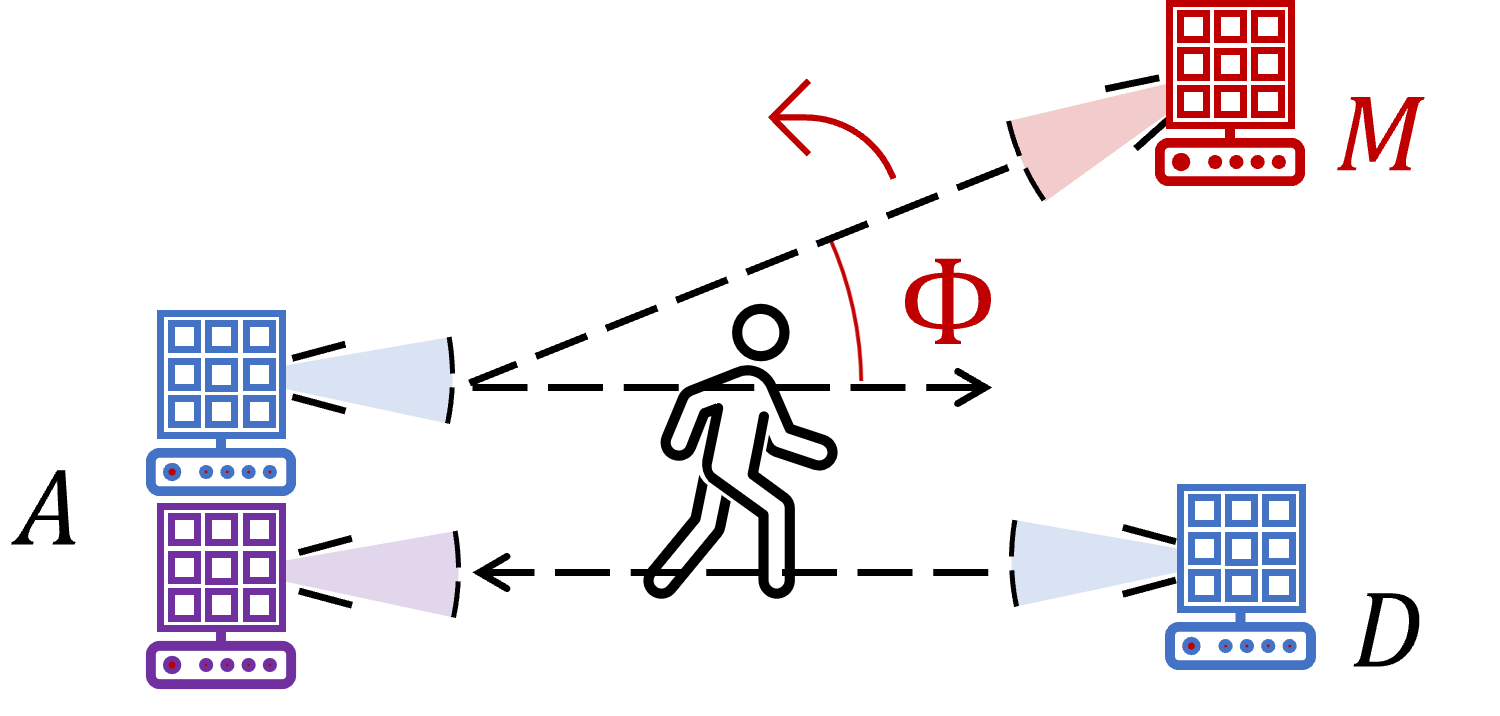}& 
\includegraphics[width=0.5\columnwidth]{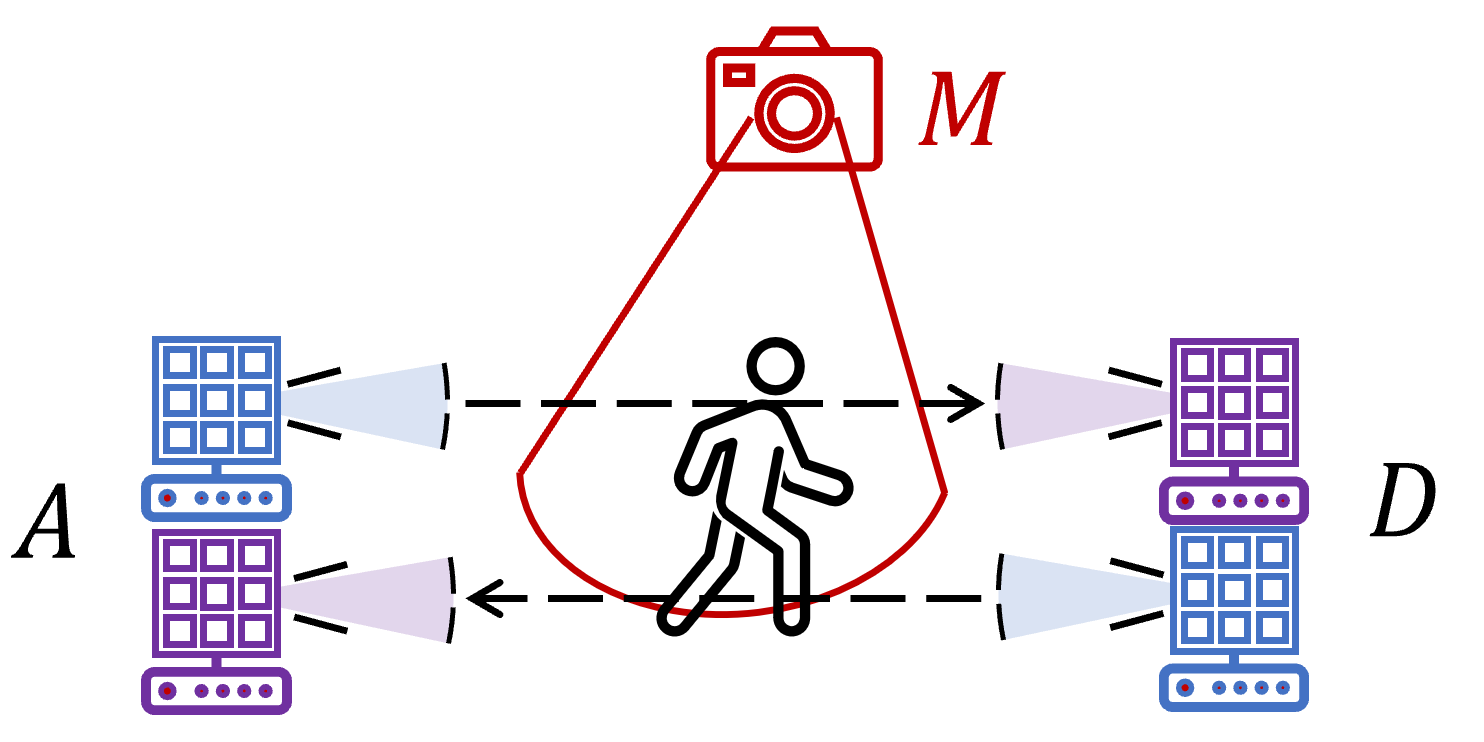}\\
(a) an eavesdropping $M$ &
(b) a keylogging $M$\\[1ex]
\multicolumn{2}{c}{\includegraphics[width=0.7\columnwidth]{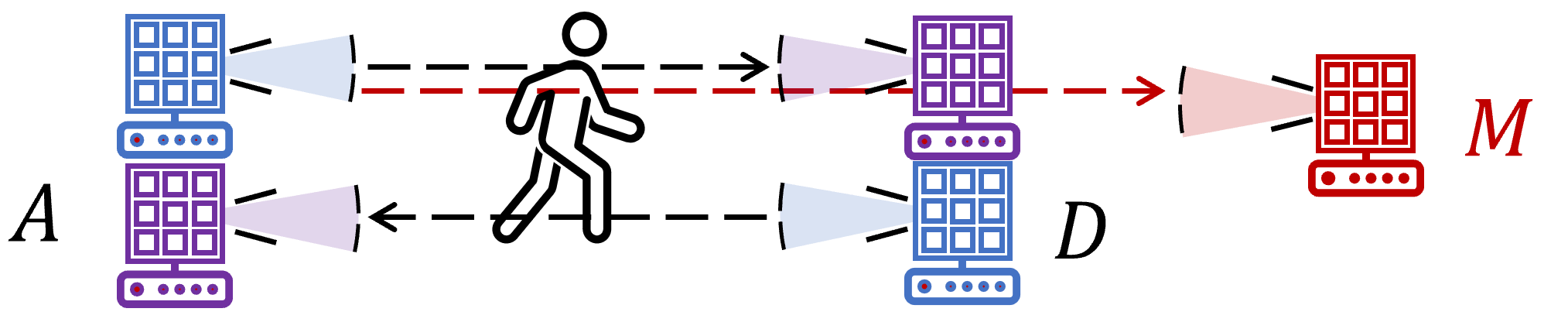}}\\
\multicolumn{2}{c}{(c) a nearby eavesdropper $M$}\\
\end{tabular}
\vspace{-0.1in}
\caption{Experiment settings for evaluating Jellybean with different types of $M$.}
\label{fig:M_set}
\vspace{-0.15in}
\end{figure}

\begin{figure}[t]
\centering
\setlength{\tabcolsep}{-5pt}
\begin{tabular}{cc} 
  results in room A & results in room B\\[1ex]
  \includegraphics[width=0.55\columnwidth]{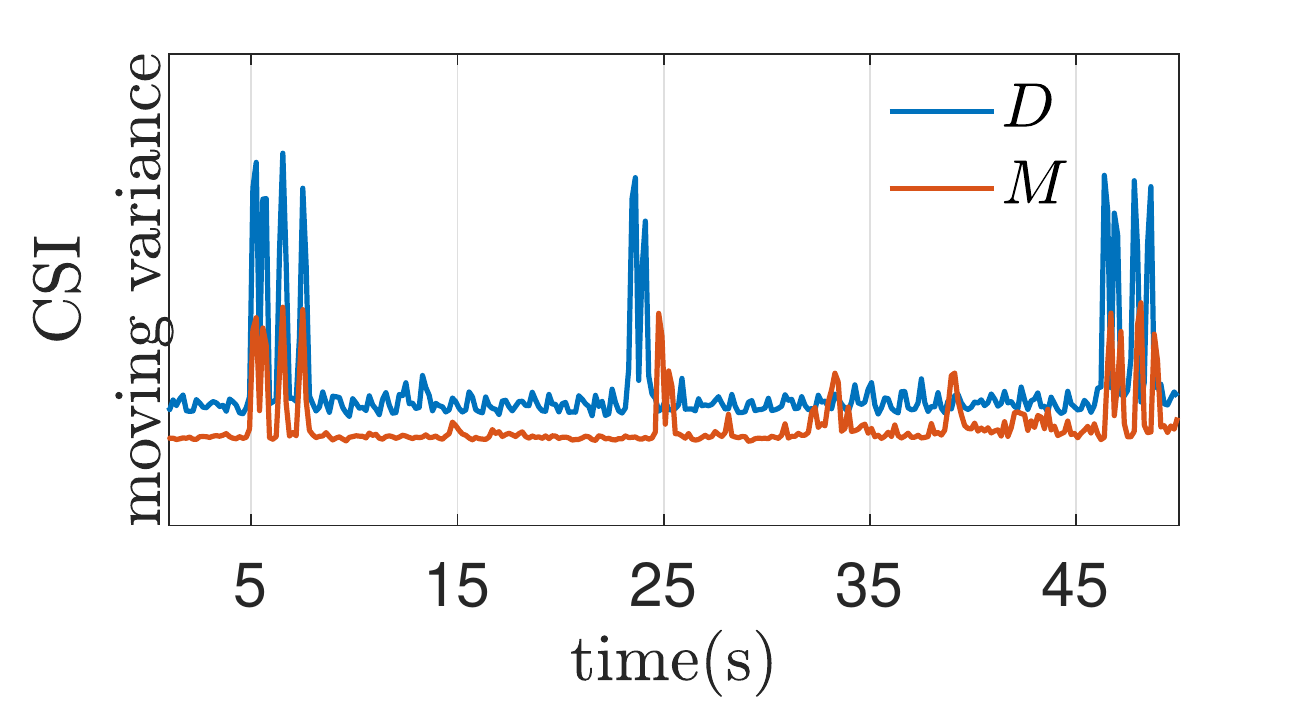}&
  \includegraphics[width=0.55\columnwidth]{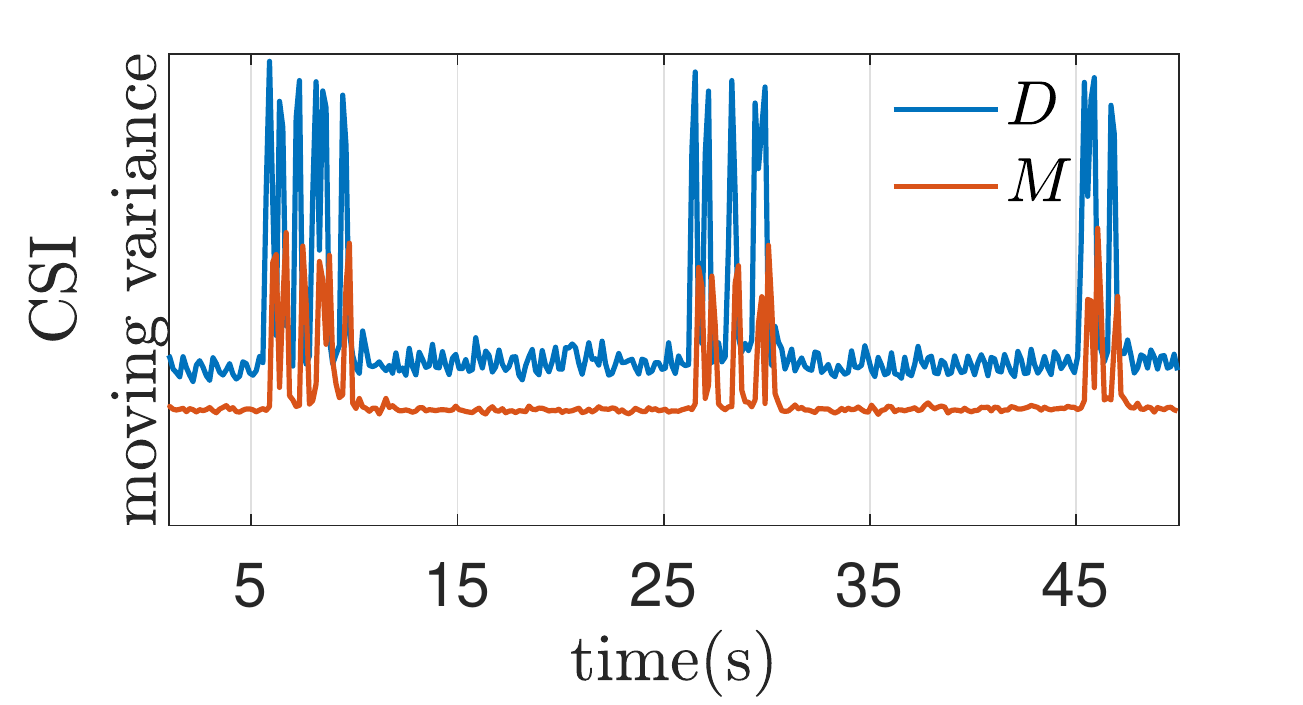}\\
\multicolumn{2}{c}{(a) $M$ with an intersection angle $\Phi=0^{\circ}$}\\
  \includegraphics[width=0.55\columnwidth]{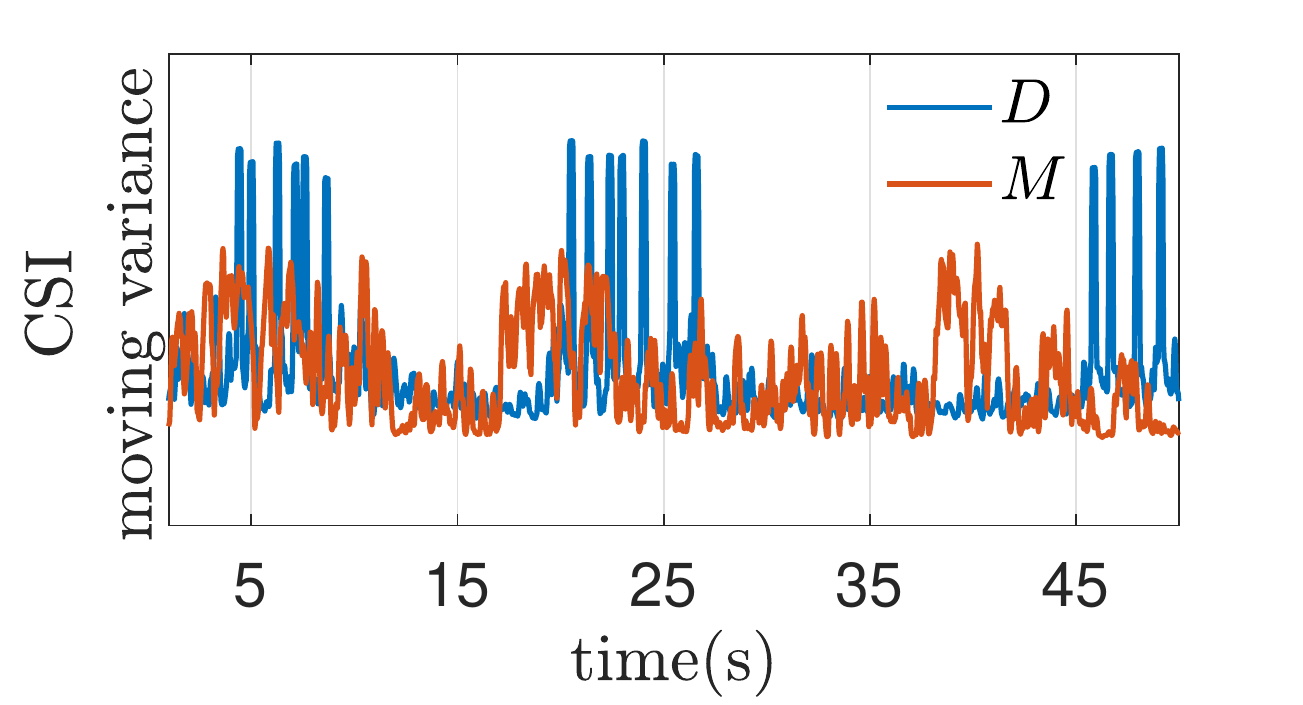}&
  \includegraphics[width=0.55\columnwidth]{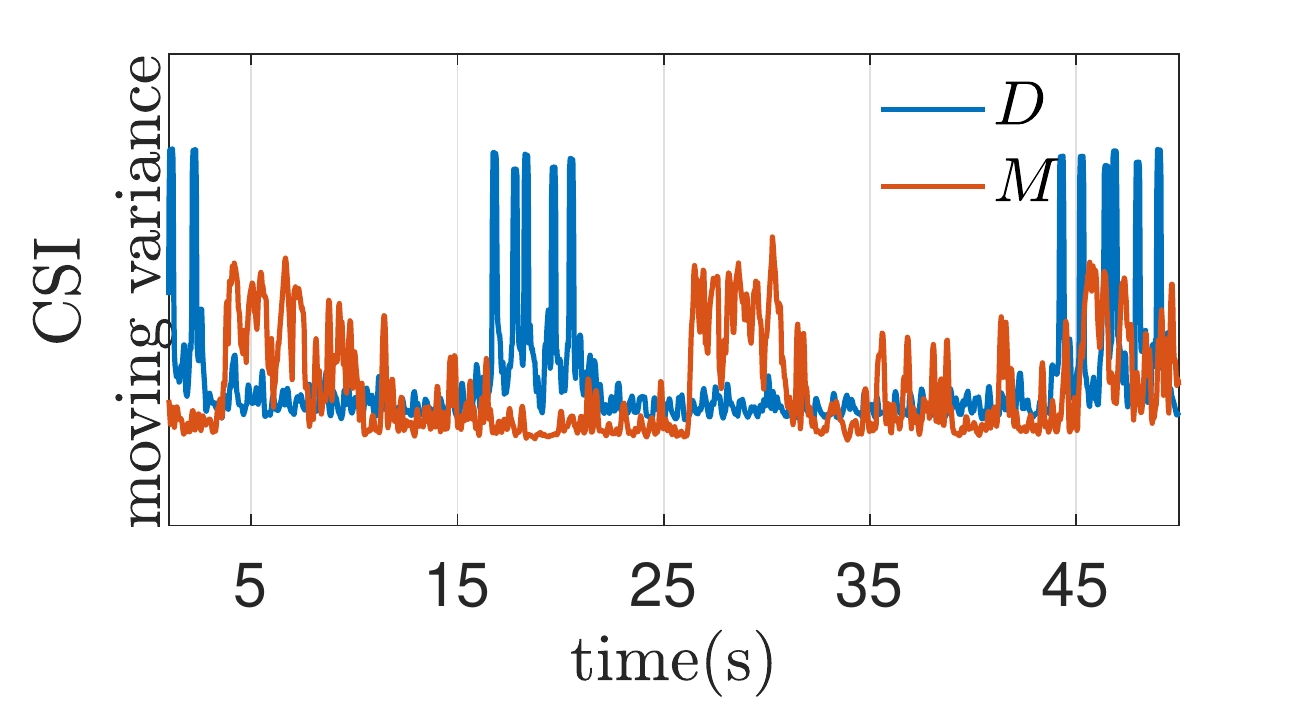}\\
\multicolumn{2}{c}{(b) $M$ with an intersection angle $\Phi=20^{\circ}$}\\
  \includegraphics[width=0.55\columnwidth]{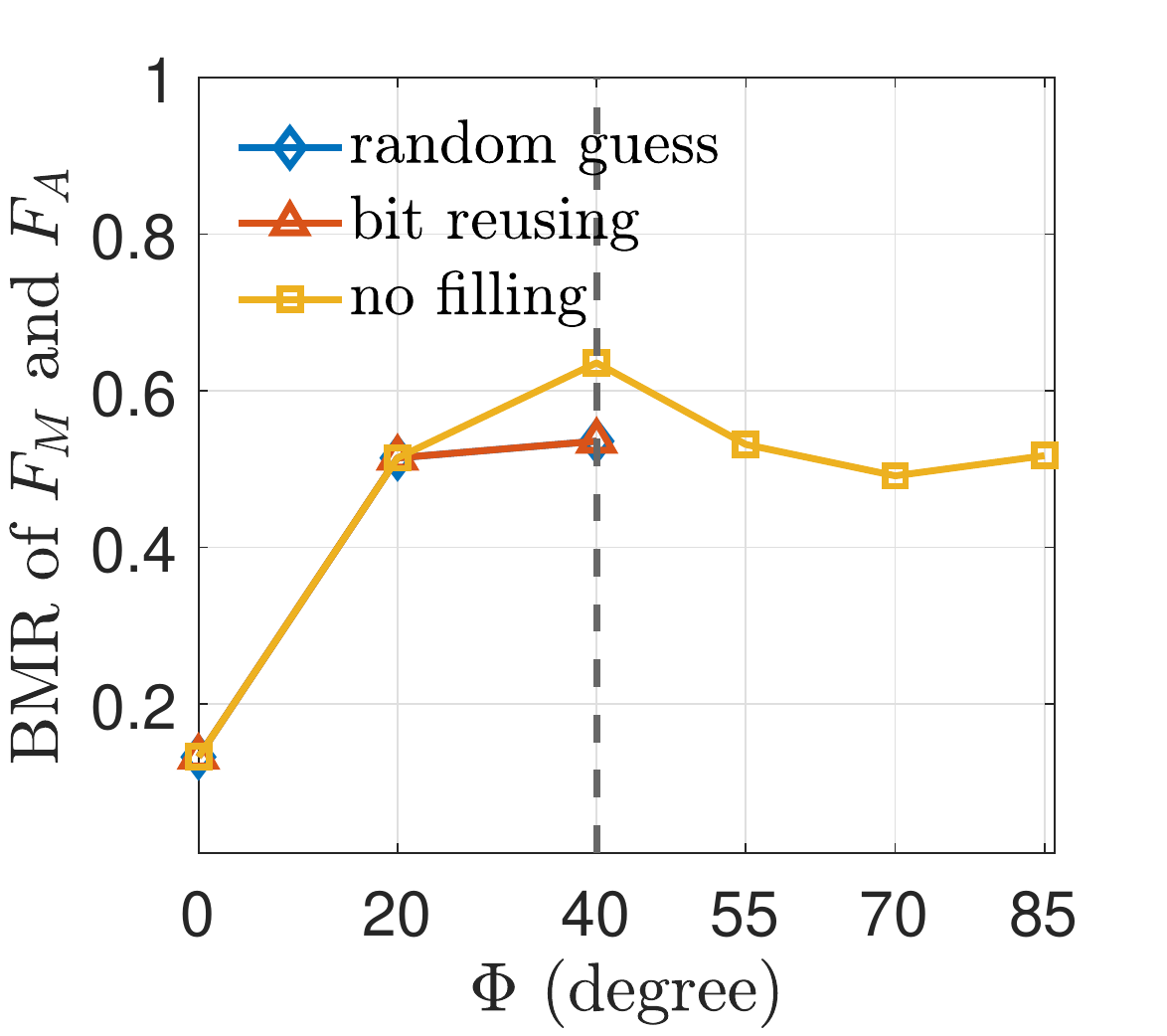}&
  \includegraphics[width=0.55\columnwidth]{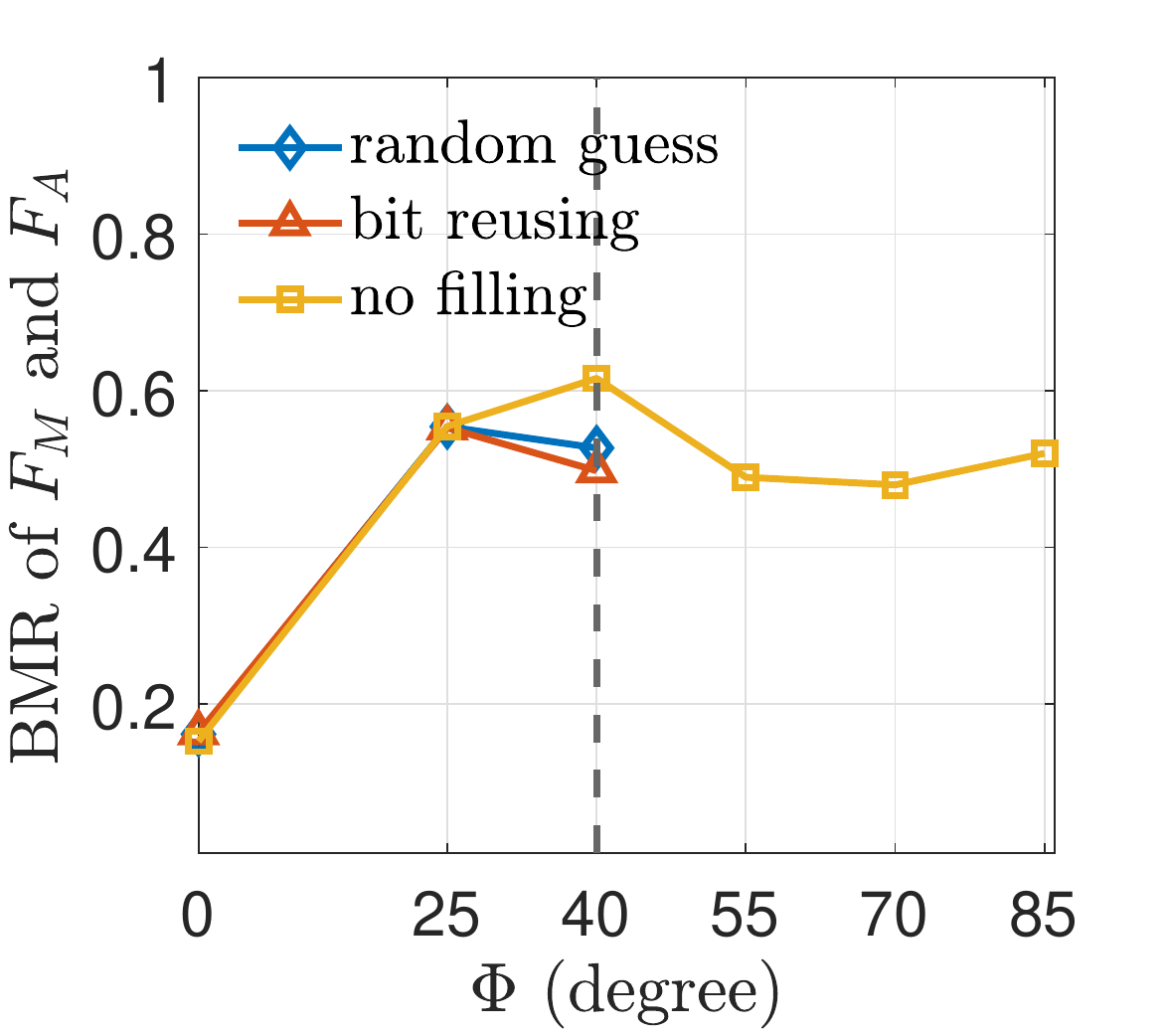}\\[1 ex]
\multicolumn{2}{c}{(c) BMR of $F_A$ and $F_M$ as a function of $\Phi$, $F_M$ after the dash line}\\
\multicolumn{2}{l}{\hspace{.6cm} is generated with random guess due to the trivial CSI samples}\\
\multicolumn{2}{l}{\hspace{.6cm} received at $M$.}
\end{tabular}
\vspace{-0.1in}
\caption{Evaluation of Jellybean for an eavesdropping $M$ in room A (the left column) and room B (the right column).}
\label{fig:bms_M}
\vspace{-0.15in}
\end{figure}

\begin{figure}[t]
\begin{tabular}{c}    
\includegraphics[width=1.1\columnwidth]{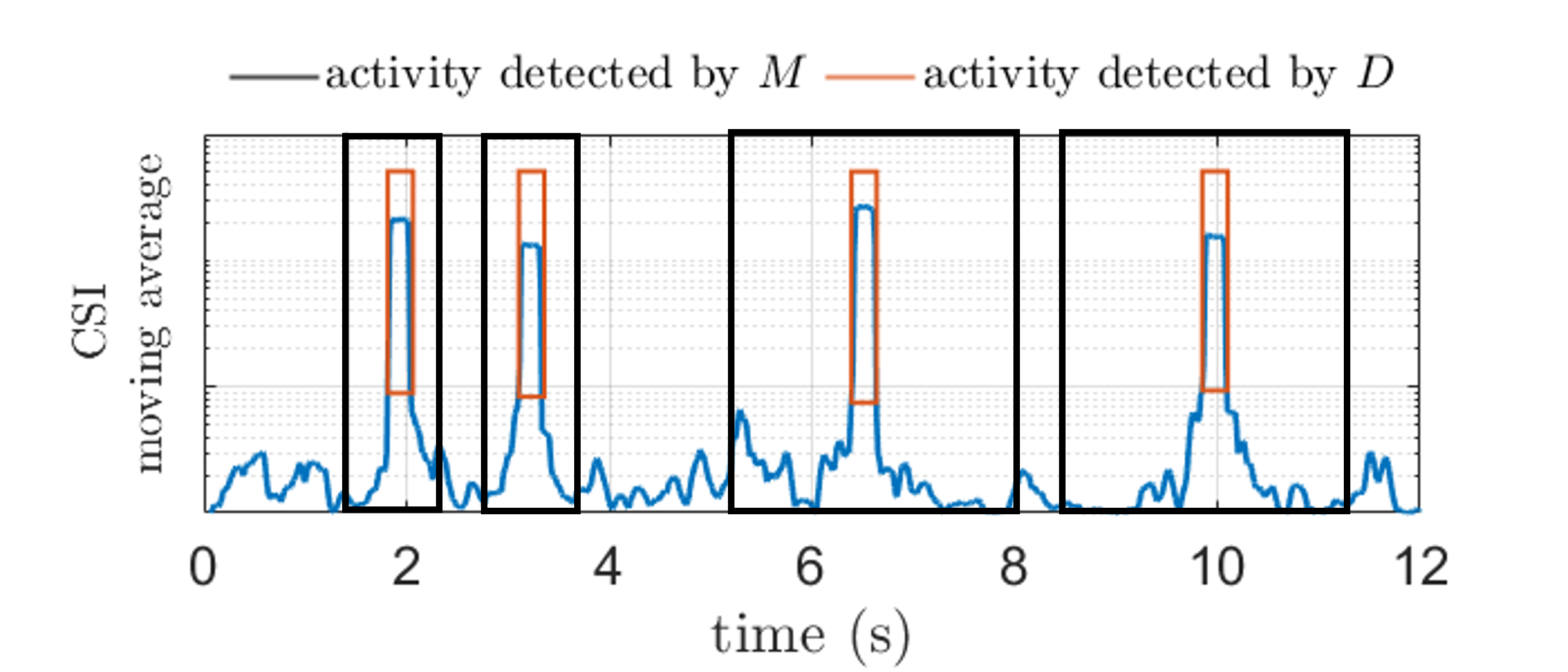} \\
(a) artificial activity\\
\includegraphics[width=1.1\columnwidth]{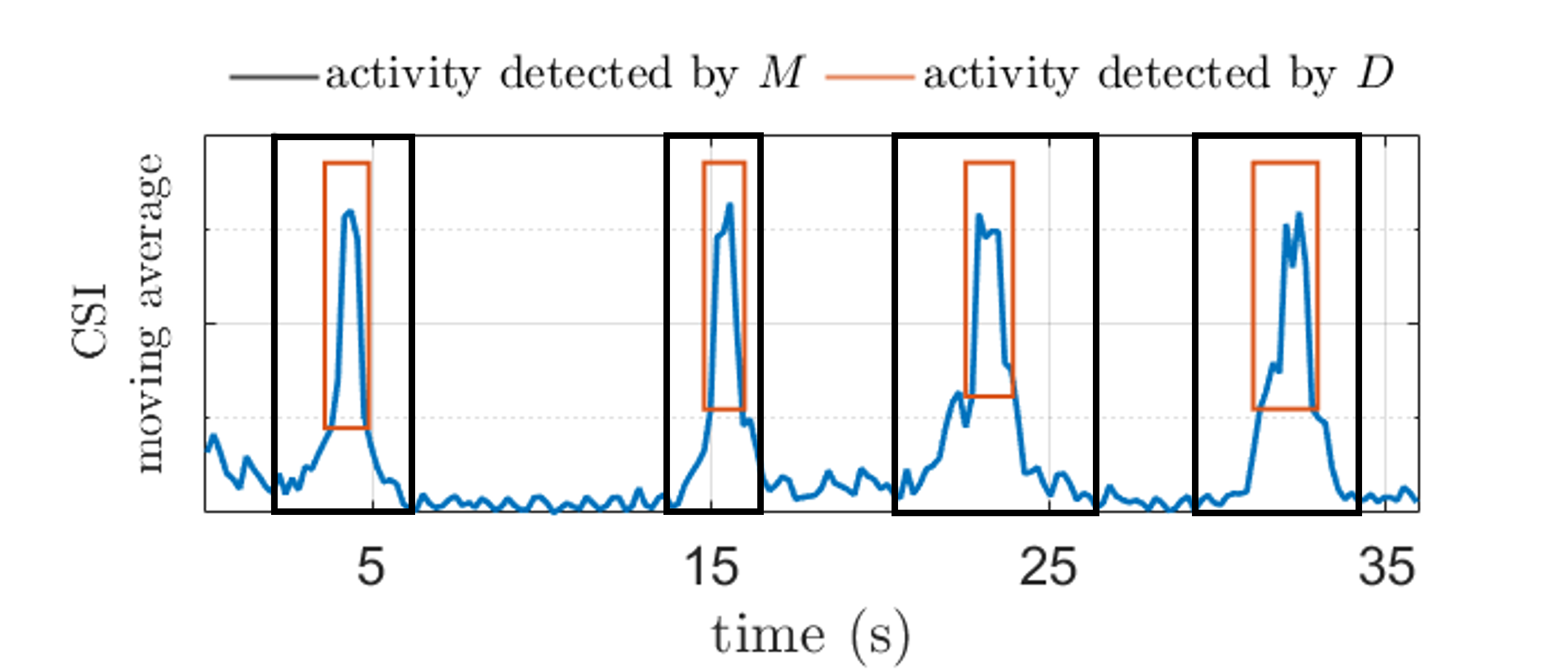} \\
(b) daily activity
\end{tabular}
\vspace{-0.1in}
\caption{Security evaluation of the Jellybean for a keylogging $M$ with two types of activities in experimental room A.}
\label{fig:motion_detec}
\vspace{-0.15in}
\end{figure}

\begin{figure*}[t]
\centering
\begin{tabular}{c}
\includegraphics[width=2.1\columnwidth]{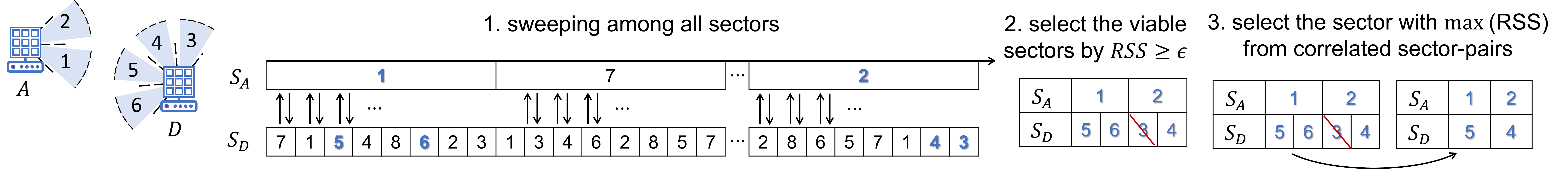}\\
(a) path discovery\\
\includegraphics[width=2.1\columnwidth]{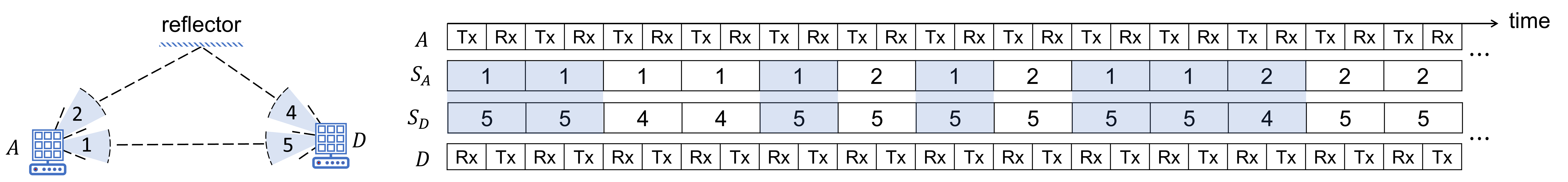}\\
(b) uncoordinated path hopping
\end{tabular}
\vspace{-0.1in}
\caption{An example of the path discovery phase and path hopping phase in UPH. The numbers indicate the chosen sectors at each device. If $A$ and $D$ simultaneously switch to a usable sector pair (2-4 and 1-5 in the example), the probing round is successful and activities are measured.}
\label{fig:uncoordiated_beamforming}
\vspace{-0.15in}
\end{figure*}

{\bf 1) Eavesdropping Adversary:} We first consider an eavesdropper $M$ is positioned at the same distance from $A$ as $D$. This positioning is intended to ensure a similar channel loss for $A$-$M$ channel as for $A$-$D$ channel, thereby enabling $M$ to actively adjust its angle from the $A$-$D$ channel to identify the best eavesdropping opportunities. Due to the hardware limitation, we only used four USRPs N200 to implement $A$, $D$, and $M$, as shown in Figure \ref{fig:M_set}(a). We maintained the distance for $A$-$D$ and $A$-$M$ links as 4m and collected 2-minute CSI samples with different angles $\Phi$ between $A$-$D$ and $A$-$M$ channels. To ensure fairness, activities were conducted near hub $A$, allowing activities observations for both $D$ and $M$.

Due to the directionality of the directional beam produced by the phased array beamformer, the probing signal emitted by $A$ gets lost at $M$ when $\Phi$ exceeds a certain value. To compensate the sample loss and compare the fingerprint of $M$ with that of $A$, we fill in the corresponding missing bits in $F_M$ to ensure that $F_A$ and $F_M$ have the same length. We employed three strategies for this purpose: 1) random guess that involves randomly filling in a bit to replace the missing one in $F_M$, 2) bit reusing that reuses a segment of previously extracted bits from $F_M$, and 3) no filling that leaves the missing bit blank. Figure \ref{fig:bms_M}(c) shows the BMR between $F_A$ and $F_M$ as a function of $\Phi$ for these three strategies. We observed a high BMR between $F_A$ and $F_M$, except when $\Phi=0^{\circ}$.  Figures \ref{fig:bms_M}(a) and (b) further present the moving variance of the 45-second CSI samples collected by $D$ and $M$, respectively, for $\Phi=0^\circ$ and $\Phi=20^\circ$. When $\Phi=20^\circ$, there is a significant difference in observations between $D$ and $M$, resulting in a failed key agreement process at $M$. However, when $M$ perfectly aligns with the $A$-$D$ link at $\Phi=0^\circ$, she can successfully retrieve the agreed key between $A$ and $D$ based on her own observations, which exhibit a relatively low BMR of 0.2. Such a strong eavesdropper is referred to as a {\em nearby} eavesdropper and we introduce an upgraded protocol called Jellybean+ in Sec.~\ref{sec:uncoor_protocol} to thwart it. In summary, the basic Jellybean protocol is secure against  a co-located eavesdropper outside the close proximity to $A$ or $D$.

{\bf 2) Keylogging Adversary:} A keylogging adversary uses vision instead of RF devices for activity fingerprinting. $M$ monitors all the activities taking place in the shared area between $A$ and $D$ with a camera, as shown in Figure~\ref{fig:M_set}(c). She determines the start and ending time for each activity by examining all the video recordings frame by frame and fingerprints the activity by filling bits 1s in the duration of each activity and 0s for the rest.

We let the keylogging adversary record the activities with a cell phone and we evaluate the performance of it by comparing the start and end time of the activities extracted by $M$ and $D$. Figure~\ref{fig:motion_detec} shows the activity detection results for $D$ and $M$ with one artificial and daily activity, respectively. We can see that due to the denoising and smoothing process in the Jellybean protocol,  $D$ detects the duration of one artificial activity in a range of 0.08-0.1 seconds, whereas $M$ records a longer duration spanning 1-3 seconds for the same activity. Similarly, for daily activities, $M$ detects the activity with a significantly longer duration compared to $D$. This substantial discrepancy in activity fingerprinting between $M$ and $D$ makes it impossible for $M$ to agree on the same key as established between $A$ and $D$.

{\bf 3) Beam-stealing Adversary.} $M$ deceives $A$ and $D$ by manipulating the beam alignment process. $M$ forges sector sweep frames with high received signal strength (RSS), leading $A$ and $D$ to align their beams with $M$ instead of each other \cite{steinmetzer2018beam}. This beam-stealing attack allows $M$ to launch a MitM attack against the pairing protocol. 

The basic Jellybean protocol is vulnerable to this attack due to the lack of authentication in the beam alignment process. With the amended Jellybean+ protocol introduced in Sec. \ref{sec:uncoor_protocol}, we remove the assumption of beam alignment and instead employ an uncoordinated path-hopping (UPH) mechanism. This approach takes advantage of multiple paths and utilizes the electronic antenna steering capabilities of mmWave technology to randomize the activity sensing area. By adopting Jellybean+, we not only defend against the strong nearby eavesdropper, but we also mitigate the threat posed by an active beam-stealing adversary.

\section{Uncoordinated Path Hopping}
\label{sec:uncoor_protocol}
Based on the security analysis, the activity fingerprinting process is vulnerable to two attacks. First, a passive adversary that is aligned with the path observed by $A$ and $D$ will extract a similar fingerprint to $F_A$ and $F_D$ (with a BMR below 0.2 as seen in Figure~\ref{fig:bms_M}(c)), thus also being able to obtain $k$ from the commitment value $\delta_A.$ Second, an active adversary launching a beam-stealing attack can become a MitM between $A$ and $D$. In this scenario, $M$ can independently establish keys with $A$ and $D$ using the activities in the $A$-$M$ and the $M-D$ paths, respectively. 

To prevent these attacks, we propose the {\em Jellybean+} protocol that exploits the potential availability of multiple communication paths between $A$ and $D$ (e.g., LoS path and one reflection path). The main idea is to ``hop'' (steer) the antennas between paths and merge activities from each path (direction) into the same fingerprint. Without knowing which path to monitor at any given time, Mallory cannot extract the same fingerprint, even if she can observe all paths. Likewise, in a beam-stealing attack scenario, the activities toward the direction of Mallory contribute to the fingerprint only partially, since activities on other paths are also present. Evidently, path hopping is similar to frequency hopping but applied in the space domain as opposed to the frequency domain.

There are two main challenges for applying path hopping in our context, due to the lack of initial trust. First, how do $A$ and $D$ discover viable communication paths (paths where probe decoding is possible) and second, how do $A$ and $D$ securely share a common path hopping sequence. We address both challenges by developing an {\em uncoordinated path-hopping mechanism} (UPH), where $A$ and $D$ hop independently between the available paths. The UPH mechanism has two phases; the path discovery phase and the path hopping phase.

{\bf 1. Path discovery.} To discover the available paths between $A$ and $D,$ we propose a new beam alignment process that keeps track all transmitter-receiver antenna sector combinations that yield an RSS that exceeds the receiver's sensitivity by a margin $\epsilon.$ The path discovery process is demonstrated in Figure~\ref{fig:uncoordiated_beamforming}(a).  Both $A$ and $D$ sweep the plane but with different periods. $A$ sweeps slowly whereas $D$ sweeps fast. The protocol proceeds in rounds. During a round, $A$ fixes its antenna at one sector whereas $D$ sweeps through all sectors in random order. Consider $A$ pointing to $S_A$ whereas $D$ is pointing to $S_D$. Device $D$ sends a probe (sector sweep frame) to $A$. If the probe is correctly decoded and the RSS exceeds the receiver sensitivity by the margin $\epsilon$, $A$ replies with each own probe from sector $S_A.$ If $A$'s probe is also decodable, $D$ records sector $S_D$ as viable and responds to $A$ with an ACK. Upon reception of the ACK, $A$ also records $S_A$ as viable. This process is repeated for every sector of $D,$ while $A$ remains on the same sector. If multiple sectors of $D$ can be received by $A$, $D$ chooses the strongest one (highest RSS). In the next round, $A$ randomly switches to another sector and the probing process is repeated. 

At the end of the path discovery process, each party learns the set of viable sectors that can achieve communication with the other party. Note that each party only knows its own viable sectors. In the example of Figure~\ref{fig:uncoordiated_beamforming}(a), we show several rounds of path discovery. In round 1, $A$ steers to sector $S_A =1$ while $D$ sweeps through all eight sectors. Successful communication takes place in sector pairs $1-5$ and $1-6$. $A$ records sector 1 as a viable sector and $D$ records sectors 5 and 6 as viable sectors. Because the RSS on sector 5 is higher, only sector 5 is recorded. The process is repeated on future rounds. At the completion of 8 rounds, two paths yield viable directions, namely 1-5 (LoS) and 2-4 (reflection). 

{\bf 2. Uncoordinated path hopping.} In the next phase, $A$ and $D$ hop through their viable sectors in an uncoordinated fashion. The dwell time on each sector is sufficient to exchange several probes. If the probes are successfully exchanged, both devices record the CSI as part of the set used to generate each fingerprint. Otherwise, the CSI samples are discarded. In Figure~~\ref{fig:uncoordiated_beamforming}(b) we show $A$ and $D$ hopping in an uncoordinated fashion between their viable sectors discovered in Figure~~\ref{fig:uncoordiated_beamforming}(a). Only the blue pairs result in successful probe exchanges that are used for fingerprint generation.

\begin{figure}[t]
\begin{tabular}{c} 
\includegraphics[width=0.7\columnwidth]{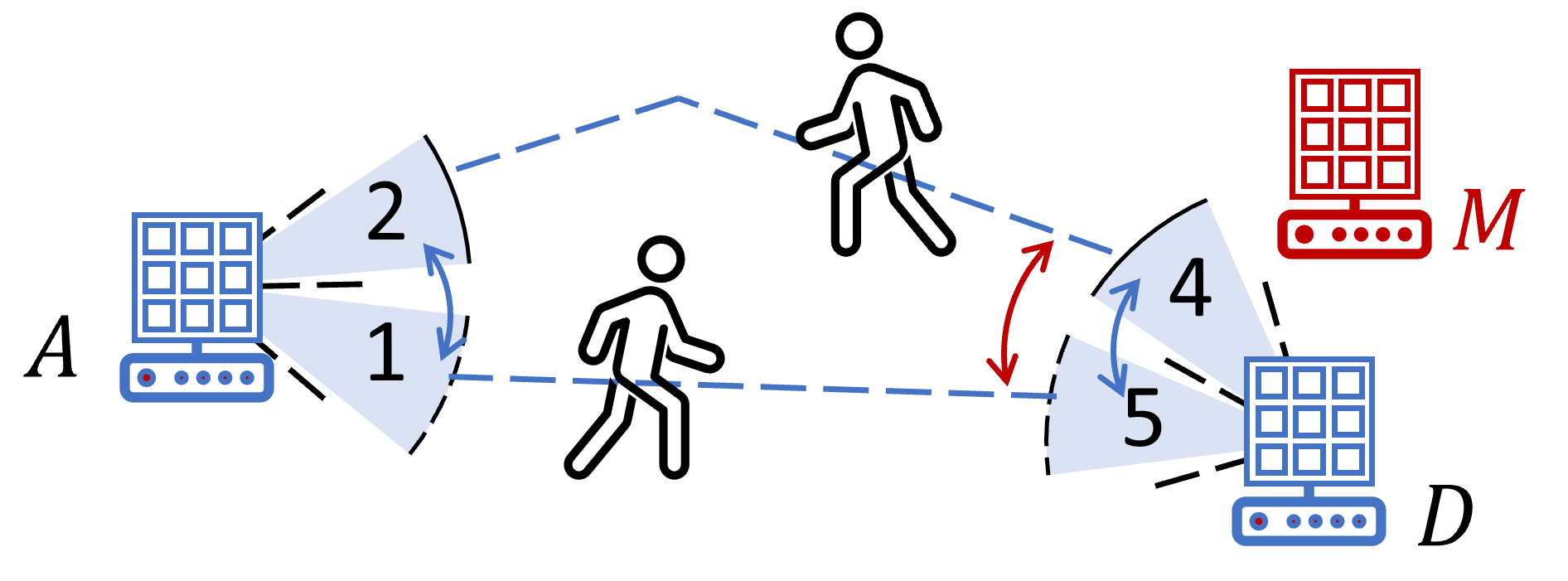} \\
(a) Passive adversary close to $D$.\\
\includegraphics[width=0.6\columnwidth]{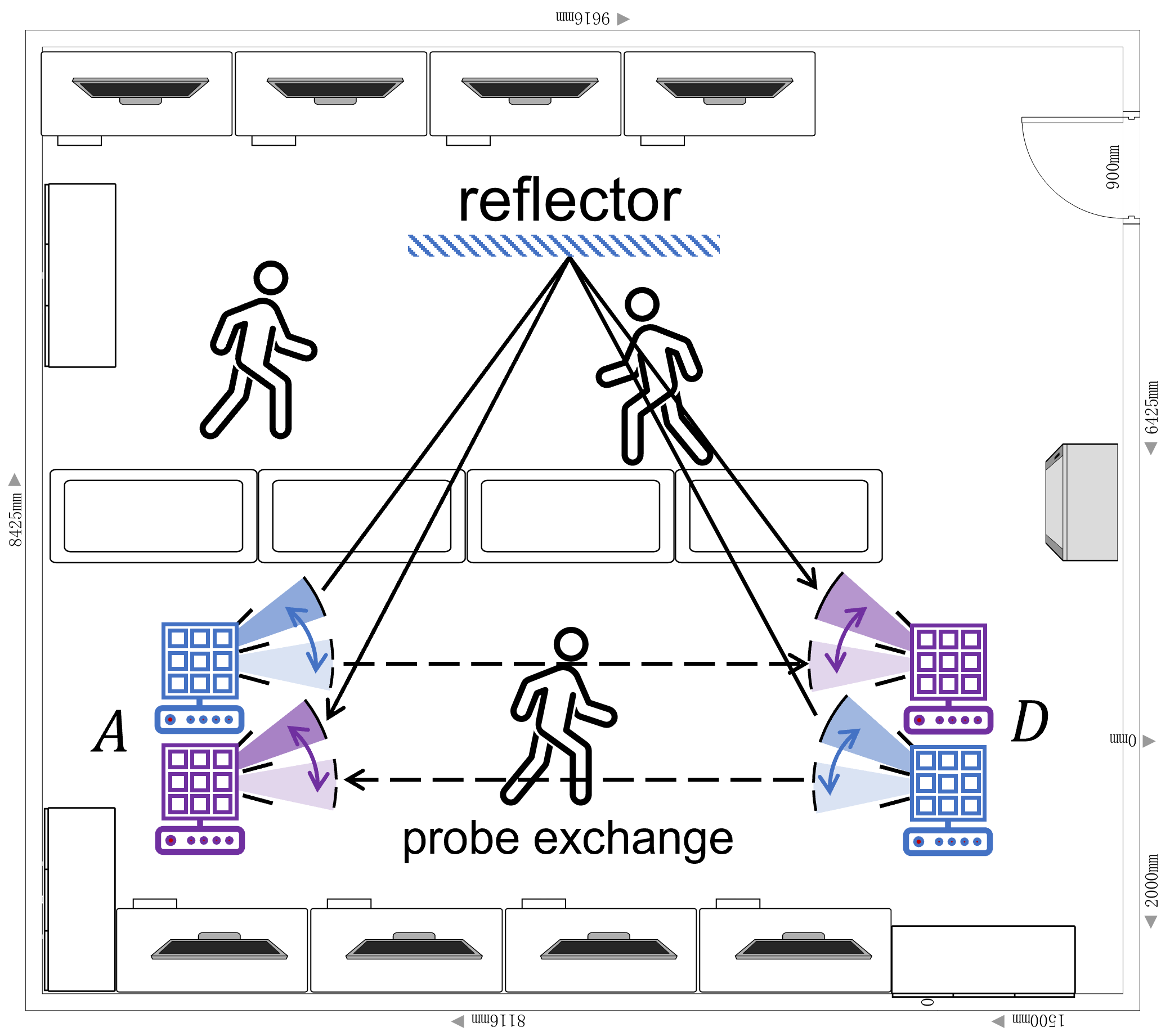}\\
(b) Setup for Jellybean+
\end{tabular}
\vspace{-0.1in}
\caption{ (a) passive threat model and (b) experiment setup.}
\label{fig:closebyM_UBM}
\vspace{-0.2in}
\end{figure}

\subsection{Security Analysis of Jellybean+} The security of Jellybean+ hinges on the unpredictability of the path that is sampled at any given time. Because the antennas are independently steered, Mallory does not know which paths are being sampled, unless she can capture all probes at all times.

{\bf Resistance to passive adversaries.}
We first consider the case where $M$ is positioned close to one of the devices, say $D$, such that it can observe similar viable paths as $D.$ This model is shown in Figure~\ref{fig:closebyM_UBM}(a). This adversary was shown to be capable of extracting a similar fingerprint as $D$ if only the LoS path was used. To evaluate Jellybean+, we assume the best case for $M$ where she knows all viable paths identified during the path discovery process. Referring to the example of Figure~\ref{fig:uncoordiated_beamforming}(a), this would correspond to knowing that paths 1-5 and 2-4 were discovered. This could be inferred by either knowing the device topology or intercepting probe messages. To measure the same fingerprint $F_D,$ Mallory must hop between the viable paths in an uncoordinated fashion, similar to $A$ and $D$. The probability of extracting the same fingerprint equals the probability of following the same path sequence as $D$ for the fraction of hops that correspond to viable paths (blue-shaded slots in Figure~\ref{fig:uncoordiated_beamforming} (b)). This probability is expressed in the following proposition. 

\begin{proposition}
\label{passive passing}
Let $A$ and $D$ hop between $P$ viable paths using uncoordinated sequences that match at $Q$ hops. The probability that $M$ extracts the same fingerprint as $D$ is given by 
\[
Pr[F_M=F_D] =\left(\frac{1}{P}\right)^Q.
\]
\end{proposition}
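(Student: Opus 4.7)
The plan is to reduce the event $\{F_M = F_D\}$ to a sequence of per-slot matching events and then exploit the independence and uniformity of the uncoordinated hopping.

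First, I would fix attention on the $Q$ hops at which $A$ and $D$ happen to land on a viable sector pair. By construction these are the only slots whose CSI samples survive the filtering step in the Jellybean+ protocol, so they are the only slots that contribute to $F_D$ (and symmetrically to $F_A$). For every other slot, the probe exchange between $A$ and $D$ fails and nothing is written into the fingerprint, so whatever Mallory does on those slots is irrelevant to the equality $F_M = F_D$.

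Next, I would argue that for $M$ to measure a CSI sample statistically indistinguishable from $D$'s in a given contributing slot, $M$ must have her antenna steered along the same path $D$ is sampling during that slot: the channel blockage that produces the activity fingerprint is path-specific by the geometric-mmWave argument already used in Section~\ref{sec:AFG}, so a beam pointed along a different viable path yields an essentially uncorrelated CSI trace and will produce the wrong bit in the fingerprint with high probability. Let $X_i$ be the indicator that $M$'s sector in the $i$-th contributing slot coincides with the path used by $A$ and $D$ in that slot. Then $\{F_M = F_D\} = \{X_1 = X_2 = \cdots = X_Q = 1\}$.

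Now I would invoke the uncoordinated character of the hopping. Since the hopping sequences of $A$, $D$, and $M$ are drawn independently and uniformly over the set of $P$ viable paths, and since $M$ has no side information tying her choice to $D$'s choice (she may know the set of viable paths from the discovery phase, but not $D$'s realized schedule, because the discovery phase only reveals each party's own viable sectors), we have $\Pr[X_i = 1] = 1/P$ for each $i$, and the $X_i$ are mutually independent across slots. Applying the product rule,
\[
\Pr[F_M = F_D] \;=\; \prod_{i=1}^{Q} \Pr[X_i = 1] \;=\; \left(\frac{1}{P}\right)^Q.
\]

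The main obstacle I anticipate is not the combinatorics but the modeling step: precisely justifying that $M$'s best achievable per-slot match probability is exactly $1/P$, i.e., that she cannot do better than uniform guessing. This requires appealing to the uncoordinated-hopping assumption (independent uniform draws) and to the fact that intercepted probe traffic does not leak $D$'s current sector choice in a timely way; a fully rigorous version would also note that any deterministic strategy of $M$ induces the same marginal by symmetry over $D$'s uniform randomness, so restricting to uniform guessing by $M$ is without loss of generality.
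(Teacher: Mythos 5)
Your proposal is correct and follows essentially the same route as the paper's proof: both reduce the event $\{F_M=F_D\}$ to matching $D$'s sector choice on each of the $Q$ viable hops, treat each match as an independent Bernoulli trial with success probability $1/P$ (the paper phrases this via the binomial distribution $B(Q,1/P)$ with all $Q$ successes, you via a product of indicator probabilities), and multiply. Your added remarks on why non-contributing slots are irrelevant and why $M$ cannot beat uniform guessing are a slightly more careful articulation of the same modeling assumptions the paper makes implicitly (and partly in Remark 1), not a different argument.
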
 
\begin{proof}
To measure the same fingerprint $F_D$, $M$ must steer her antenna to the same sector as $D$ on all of the $Q$ hops where viable paths are observed. Each hop in the sequence is randomly selected out of $P$ paths with probability $1/P.$ Matching a single hop occurs with the same probability of selecting the specific path out of $P$ paths, which is equivalent to a Bernoulli trial with success probability $1/P$. Each hop in the hopping sequence is selected independently, making the probability of matching $X$ out of $Q$ hops follow the  binomial distribution $B(Q,1/P).$ Because we want all $Q$ hops to match it follows that   $Pr[F_M=F_D] = P(Q)= (1/P)^Q.$  
\end{proof}

{\bf Remark 1:} Proposition 1 holds even if Mallory is equipped with $P$ antennas that can sample physical activity an all $P$ paths simultaneously. This is because Mallory is not aware of where $D$ steers its antenna at every hop and therefore she has to guess which of the $P$ paths she should include in the CSI sample set. A correct guess occurs with probability $1/P$  independently at each viable hop, yielding Proposition 1. 

{\bf Remark 2:} When relaxing the fingerprint matching condition to only $K$ out of $Q$ paths, the probability of inferring a {\em similar} fingerprint to $F_D$ is given by the binomial distribution $B(Q,1/P)$, with $K$ out of $Q$ successes. 

\begin{figure}[t]
\centering
\begin{tabular}{c} 
\includegraphics[width=0.8\columnwidth]{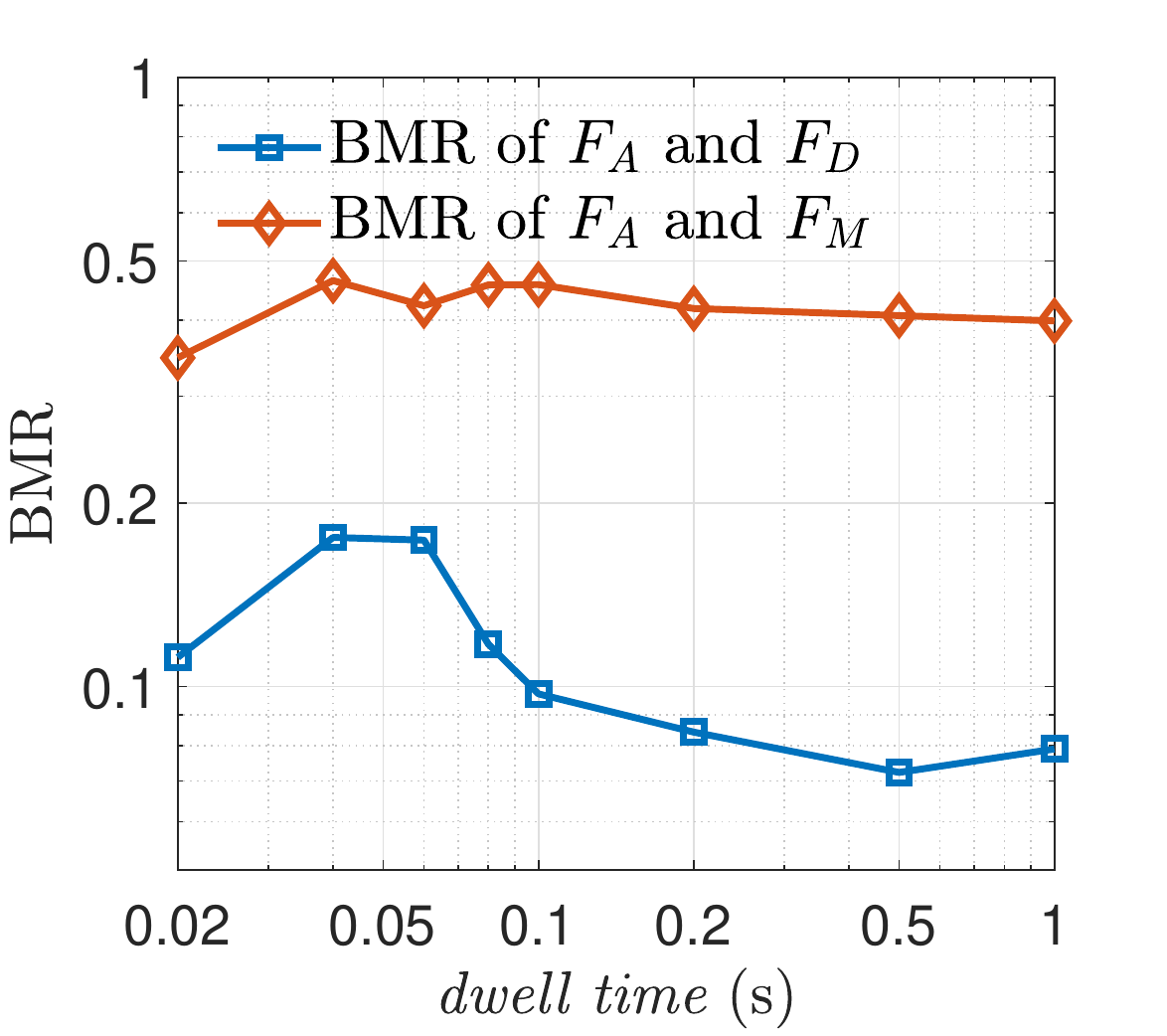} 
\end{tabular}
\vspace{-0.1in}
\caption{Security evaluation of the Jellybean+ when $M$ is co-located with $D$, where BMR between $A-D$ and $A-M$ as a function of the path dwell time.}
\label{fig:UBM}
\vspace{-0.15in}
\end{figure}

\begin{figure*}[t]
\centering
\setlength{\tabcolsep}{-7pt}
\begin{tabular}{ccc}  
\includegraphics[width=0.78\columnwidth]{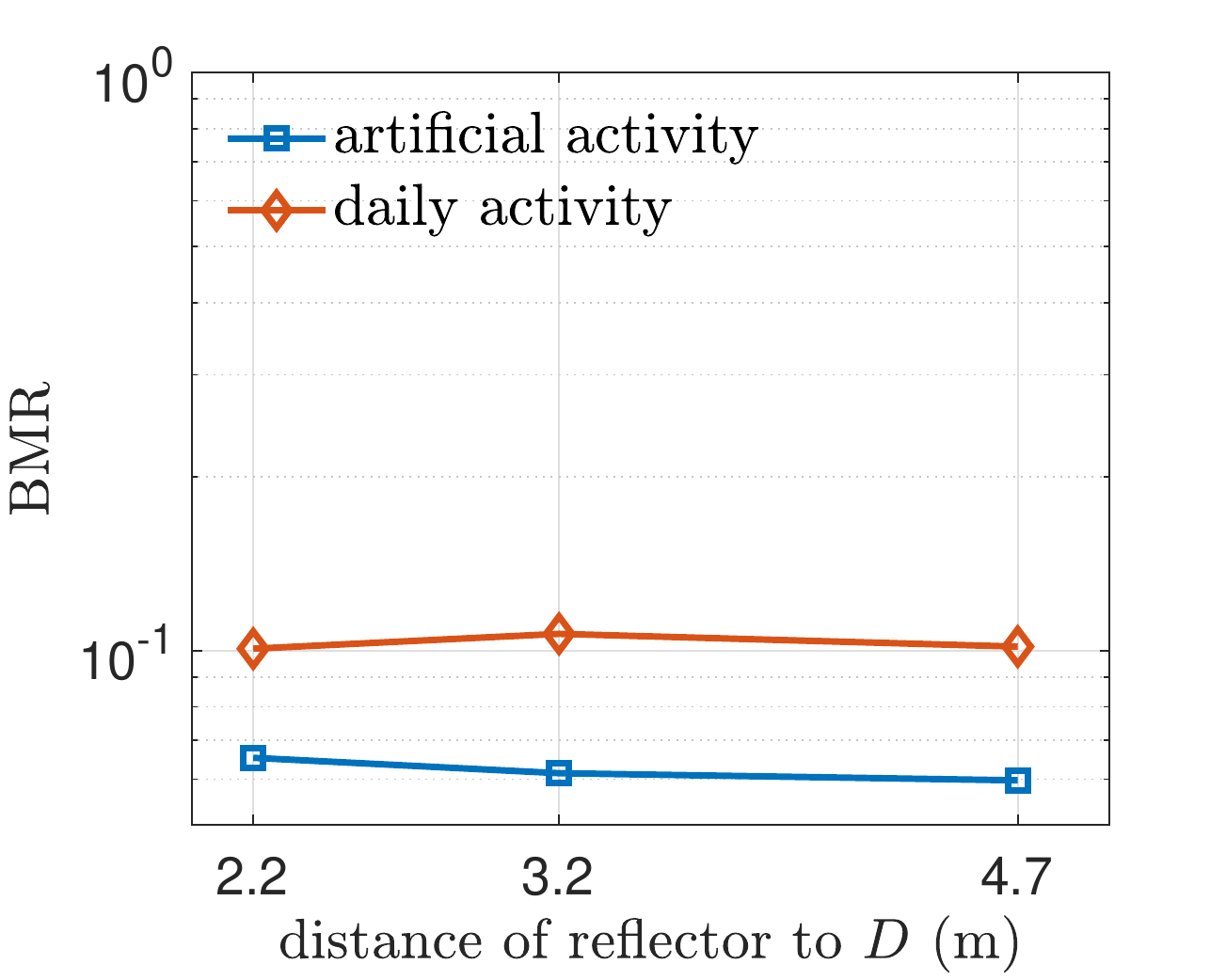} & 
\includegraphics[width=0.78\columnwidth]{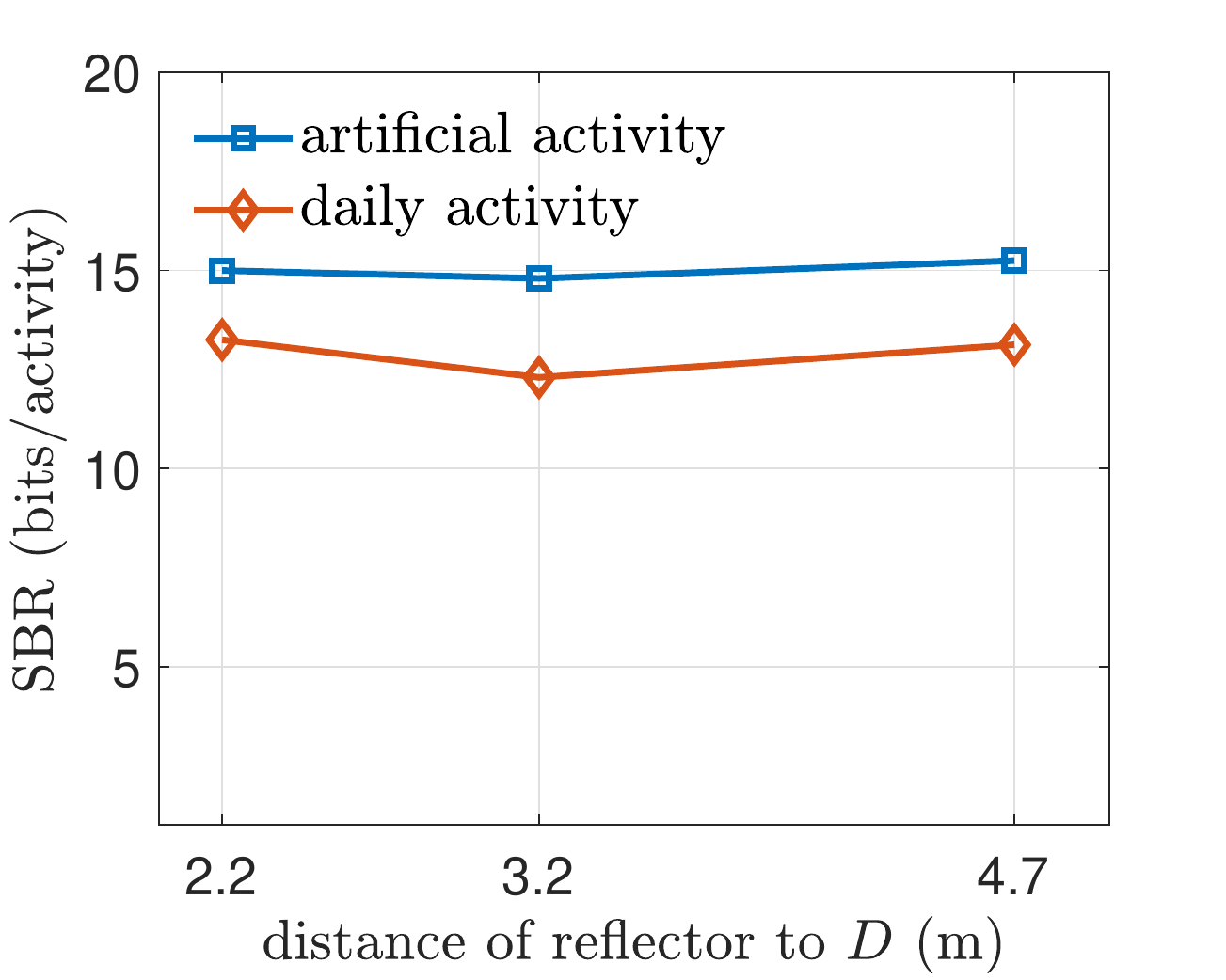} &
\includegraphics[width=0.78\columnwidth]{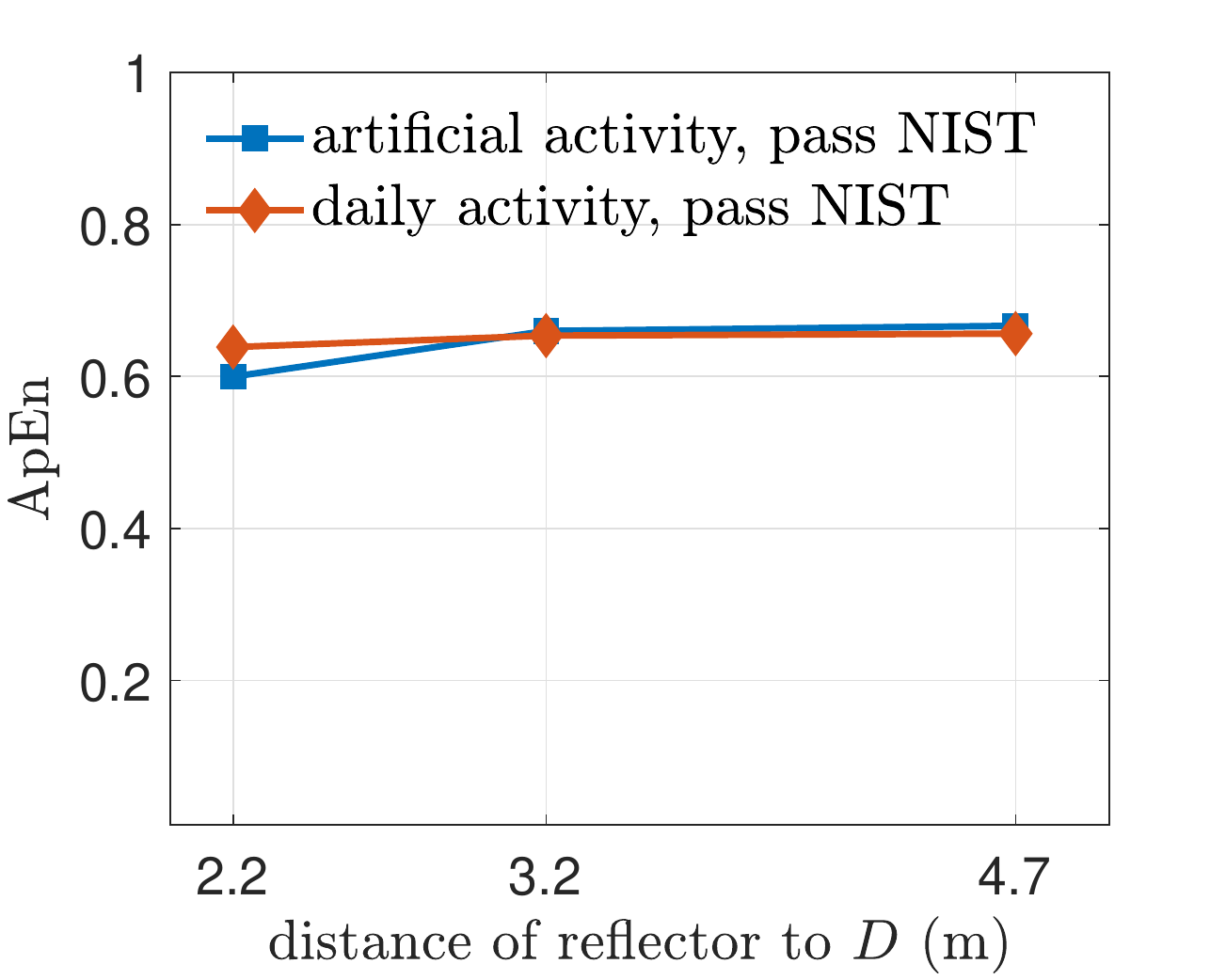}
\end{tabular}
\caption{Performance evaluation of the Jellybean+ with two types of activities in experimental room B, where the evaluation metrics BMR, SBR, and ApEn are captured in a function of the distances of $D$ to the reflector on the $A-D$ NLoS path.}
\label{fig:refl_path}
\vspace{-0.15in}
\end{figure*}

Note that we cannot analytically derive the probability of inferring the key $k$ (equivalently the BMR between $F_D$ and $F_M$) because it depends on the start time and duration of the observed activities on each path. We experimentally evaluated the BMR of the adversary under the UPH protocol, using the setup shown in Figure~\ref{fig:closebyM_UBM}(b). During the path discovery phase, $A$ and $D$ discovered two paths, one through the LoS and one through the reflector. We collected CSI samples on each of those paths for 25 minutes independently. We generated hopping sequences at $A$, $D$, and $M$ independently, and varied the {\em dwell time} on each path from 0.02 seconds to 1 second. We then followed the uncoordinated path hopping mechanism between $A$ and $D$ according to the generated hopping sequences and recorded the BMR. The adversary recorded CSI samples, only when it could hear probes from $A$. 

 Figure~\ref{fig:UBM} shows the BMR as a function of the dwell time on each path. We observe that the BMR between $F_A$ and $F_D$ remain unaffected by the introduction of UPH. On the other hand, the BMR between $F_A$ and $F_M$ is approximately 0.5 which equals the probability of guessing the correct path at every hop. The success  probability is also equal to the probability of randomly guessing $k$,  indicating that the adversary gains no advantage by attempting to observe activities from nearby when path diversity is exploited. 
 
 Note that the BMR remains relatively constant for all dwell times. This is because $M$ hopped paths with the same period as $D.$ Hopping between paths at a shorter period is not beneficial to the adversary because (a) it will miss receiving probes that are essential for determining the activity start time and duration and (b) it still has to guess which paths contribute to the fingerprints (Remark 1). 

{\bf Resistance to active adversaries.} We now describe how Jellybean+ can prevent active attacks. An active attack relies on beam-stealing to place Mallory in control of the probe exchange between $A$ and $D$, so that $M$ can independently establish keys with $A$ and $D$, using parallel pairing sessions (MitM). The beam-stealing attack is possible, because the beam alignment protocol results in the use of a single path (the best path) between two communicating devices. With UPH, multiple paths are used to extract activity fingerprints. Even if the adversary controls one or a subset of those paths, she still needs to guess when $A$ and $D$ sample the remaining attack-free paths.  Moreover, $M$ cannot sample other paths she does not control unless she simultaneously deploys passive eavesdroppers that are aligned to those paths. In this case, the results obtained for passive adversaries (Proposition 1 and Figure~\ref{fig:UBM}) hold for the active adversary. As long as there are more than one paths not controlled by $M$, the probability of matching fingerprints decays exponentially with the hopping sequence length. 

\subsection{Performance Evaluation of Jellybean+}

{\bf Time analysis.} One concern with UPH is that the uncoordinated hopping will introduce a long delay until devices can measure sufficiently long fingerprints. To alleviate this concern, we analyze the time for fingerprint generation. The path discovery phase is deterministic and requires the operation of $N$ full sweeps from $D$ where $N$ is the number of available sector ($N^2$ hops in total). In each hop, $A$ and $D$ exchange two probes (sector sweep frame) and a sweep ACK. Using parameters from the IEEE 802.11ad specification \cite{IEEE:802.11ad}, each probe is 26 bytes long whereas a sweep ACK is 28 bytes long. Moreover, the time interval between two probes for beam switching, called the short beamforming interframe spacing (SBIFS) is $T_s=1 \mu s$. Assuming a 160Mbps transmission rate, the time consumed during path discovery is  $5.12 ms$.

The delay for fingerprint extraction during the UPH phase is similar to that of Jellybean if the dwell time is selected to be low (e.g., less than 0.1sec). This is because path hopping is much faster than the activity timescale, so that all activities occurring on any of the viable paths will be intermittently sampled. 

{\bf Performance on the reflection path.} Since Jellybean+ uses a reflection path, we explored the  the BMR, SBR, and ApEn on that path. We varied the distance between $D$ and the reflector as depicted in room B of Figure~\ref{fig:exp_set}(c). The experiments were executed for 3 minutes under each distance with artificial and natural daily activities. Figure~\ref{fig:refl_path} shows the BMR, SBR, and APEn for the two types of activities. We observe that the BMR, SBR, and ApEn have similar values to those obtained in the LoS path for each activity type, despite the lower RSS on the reflected path. The use of CSI variance is a strong indicator of activity and inactivity. In fact, the reflected path is easier to disrupt due to its higher attenuation.

\section{Related Work}

{\bf Context-based Device Pairing.} Our proposed protocol belongs to  the general category of context-based device pairing protocols. Such protocols aim to establish a shared secret key between two wireless devices by sensing common contexts that are dynamic and random, without relying on any  pre-shared secrets such as   passcodes. They provide notable advantages such as reduced user efforts, perpetual key renewal, and a high level of security against remote adversaries. Commonly explored physical modalities in context-based pairing include ambient light \cite{miettinen2014context}, sound \cite{xu2021inaudiblekey, miettinen2014context, schurmann2011secure}, human operation \cite{li2020t2pair, zhang2018tap}. However, all the above methods require that all the devices possess the same sensing modality, which can be impractical for real-world applications such as IoT. Recently, several works addressed this limitation by using  multi-modal sensing  \cite{han2018you, farrukh2022one} for common activity detection, which is applicable to  devices with heterogeneous sensing capabilities. However, the types of activities that can be simultaneously sensed by different sensors are limited.  Most importantly, all those methods are only secure against a passive adversary which is located outside of the physical context (e.g.,  a   room where the legitimate devices are located). 

On the other hand, other works such as \cite{liu2013fast,mathur2008radio, mathur2011proximate, jana2009effectiveness, premnath2012secret, liu2012collaborative, zeng2010exploiting} used  the (in-band)  radio interface to sense ambient RF  signals do not require devices to possess a common sensor other than the same RF interface which is widely existed. They have leveraged the correlation of measured ambient RF signals (e.g., RSS) to securely pair two or more devices that are co-located or in   proximity \cite{mathur2011proximate, shi2012bana, wu2018survey}. However,  RF signals decorrelate fast when the distance increases (usually not correlated beyond  half wavelength), which requires  very small proximity ranges (e.g., 12.5 cm at 2.4GHz, and millimeter range for mmWave frequencies) that are impractical for many applications.  In \cite{Xu_2022}, large-scale fading of RF signals is leveraged for verification of continuous proximity, which has a larger correlation distance, but it is applicable to outdoor mobile settings and not in the mmWave band.  
In contrast, our protocol does not limit the distance between the legitimate devices as long as they are within each other's communication range.   Also, it  is secure against an adversary  in the same environment (even in close proximity to legitimate devices).

{\bf Security of mmWave Communications.} 
Early works on the security of mmWave focused on ensuring message confidentiality against passive adversaries. mmWave is generally considered to be more secure against eavesdropping due to its highly directional transmissions. Indeed, this is the case when the adversary is located outside of the  beam of a mmWave link. Yet, Steinmetzer et al. \cite{steinmetzer2015eavesdropping} demonstrated that  eavesdropping is still possible if the adversary can place a reflector within the beam to deflect the signal. There are also works that leverage channel reciprocity to extract secret keys from a mmWave link \cite{jiao2018physical,li2019physical, jiao2019physical}, however, it is challenging since the entropy is  low under static channel environments, while channel reciprocity may not hold for dynamic channels due to the high sensitivity of mmWave signals. 

On the other hand, several works studied the security of mmWave communication against active attacks such as impersonation or MitM. Balakrishnan et al. \cite{balakrishnan2019physical} studied device fingerprinting based on variations in their beam patterns, in order to prevent impersonation attacks. Wang et al. \cite{wang2021exploiting, wang2020machine} adopted a similar approach  by using machine learning algorithms to extract unique beam patterns for each device, enabling the receiver to distinguish  devices and detect spoofing attacks. However,   these methods require training, which incur extra overhead and complexity, and it is impractical to obtain secure training data when new devices join the network. Steinmetzer et al. \cite{steinmetzer2018authenticating}  first demonstrated the beam-stealing attack against the 802.11ad protocol for mmWave communications, and proposed an authenticated beam sweep protocol that integrates crypto machinery to protect it against beam-stealing. However, in the device pairing problem, there is no prior trust which makes such an approach inapplicable. 

{\bf Wireless Sensing for Human Activity Detection and Recognition (HAR).} 
Our work is also related to wireless sensing, especially passive sensing where the wireless transceivers are deployed in the environment rather than on the objects of interest. Existing works in this domain typically extract features from the CSI of the received wireless signals to detect or recognize human activity, based on the fact that different activities will incur different disturbances to the CSI.  For example,  in sub-6GHz bands, WiFi signals have been widely used for activity sensing \cite{venkatnarayan2018multi, gao2021towards, tan2016wifinger, wang2014eyes, wang2017device}. Meanwhile, due to the high bandwidth and sensing resolution of mmWave signals, mmWave sensing has attracted increasing attention in recent years \cite{liu2020real, singh2019radhar, wang2021m, sun2019privacy, jiang2018towards, vandersmissen2020indoor, lien2016soli, wang2016interacting}. 

Our work differs from HAR,  since our activity fingerprinting method only aims at detecting the presence or absence (timing) of an activity, rather than  classifying the type of activity.  In contrast, human activity recognition typically relies on machine learning methods  which require a large amount of training data. It is impractical to collect sufficient and secure training data in device pairing scenarios (where the devices meet for the first time without prior trust). In our protocol, we only require that the devices can observe common events' timing, thus training is not needed.

\section{Conclusion}
We proposed Jellybean, a novel secret-free  pairing protocol for devices operating in the mmWave band. Our basic idea is to leverage in-band mmWave communications to detect (human) activities present in the physical environment as the common  context.  Jellybean extracts activity timing information   via CSI measurements, which is used as a  source of shared entropy for  key agreement. We further enhance the security of the basic scheme with an uncoordinated path hopping mechanism, which achieves activity detection using directional beams under the absence of prior trust. The key novelty of Jellybean originates from exploiting the unique properties of mmWave signals to defend against both co-located passive adversaries  as well as active adversaries. We evaluated Jellybean experimentally in two indoor environments, and showed that it can securely pair devices  in the same room while thwarting  adversaries located in the same space. 

\bibliographystyle{ACM-Reference-Format}
\bibliography{cite}

\end{document}